\setlist{nolistsep}
\newcommand{\dcb}{\begin{array}{lll}}
\newcommand{\dce}{\end{array}}
\newcommand{\ebe}{\begin{enumerate}\setlength{\baselineskip}{13pt}\setlength{\parskip}{7pt}}
\newcommand{\dbe}{\end{enumerate} \vspace{3pt}}
\newcommand{\ibegin}{\begin{itemize}\setlength{\baselineskip}{19pt}\setlength{\parskip}{7pt}}
\newcommand{\iend}{\end{itemize}}
\newcommand{\ok}{\rule{5pt}{7pt}}
\newtheorem{Theorem}{Theorem}[section]
\newtheorem {Cor}[Theorem]{Corollary}
\newtheorem {definition}[Theorem]{Definition}
\newtheorem {Lemma}[Theorem]{Lemma}
\newtheorem {rem}[Theorem]{Remark}
\newtheorem {assumption}[Theorem]{Assumption}
\newcommand {\bd}{\begin{definition}}
\newcommand {\ed}{\end{definition}}
\newcommand {\bl}{\begin{Lemma}}
\newcommand {\el}{\end{Lemma}}
\newcommand {\bcor}{\begin{Cor}}
\newcommand {\ecor}{\end{Cor}}
\newcommand {\brem }{\begin{rem} \rm }
\newcommand {\erem }{\end{rem}}
\newcommand{\bethe}{\begin{Theorem}}
\newcommand{\ethe}{\end{Theorem}}
\newcommand {\bassumption}{\begin{assumption}}
\newcommand {\eassumption}{\end{assumption}}
\newtheorem{corollary}{Corollary}
\newtheorem{lemma}{Lemma}
\newcommand{\p}{{p}}
\newcommand{\ymis}{Y^{\textrm{mis}}_i}
\newcommand{\ymishat}{\hat{Y}^{\textrm{mis}}_i}
\newcommand{\yobs}{Y^{\textrm{obs}}_i}
\newcommand{\Ymis}{\textbf{Y}^{\textrm{mis}}}
\newcommand{\Yobs}{\textbf{Y}^{\textrm{obs}}}
\newcommand{\W}{\textbf{W}}
\newcommand{\X}{\textbf{X}}
\newcommand{\yzero}{Y_i(0)}
\newcommand{\yone}{Y_i(1)}
\newcommand{\varbeta}{\sigma^2_{\beta}}
\newcommand{\betac}{\beta^{{[c]}}}
\newcommand{\betat}{\beta^{{[t]}}}
\newcommand{\x}{\textbf{x}_i}
\newcommand{\xic}{\x^\top\betac}
\newcommand{\xit}{\x^\top\betat}
\newcommand{\betaall}{\boldsymbol{\beta}}
\newcommand{\epsall}{\boldsymbol{\epsilon}}
\newcommand{\thetaall}{\boldsymbol\vartheta}
\newcommand{\et}{{\epsilon_i^{[t]}}}
\newcommand{\ec}{{\epsilon_i^{[c]}}}
\newcommand{\epstilde}{\tilde{\epsilon}}
\newcommand{\mumis}{\mu_i^{\textrm{mis}}}
\newcommand{\muobs}{\mu_i^{\textrm{obs}}}
\newcommand{\xiobs}{\xi_i^{\textrm{obs}}}
\newcommand{\ximis}{\xi_i^{\textrm{mis}}}
\newcommand{\xitilde}{\tilde{\textbf{x}}^{\textrm{obs}}_i}
\newcommand{\xitildemis}{\tilde{\textbf{x}}^{\textrm{mis}}_i}
\newcommand{\xtilde}{\tilde{\textbf{x}}}
\newcommand{\Xtilde}{\tilde{\textbf{X}}}
\newcommand{\Rtilde}{\tilde{\textbf{R}}}
\title{Bayesian causal inference for count potential outcomes}
\author{Young Lee\\ \small Harvard University 
        \and 
        Wicher P. Bergsma\\ \small London School of Economics 
        \and 
        Marie-Ab{\`e}le Bind\\ \small Harvard University}
\date{\today}
\begin{document}

\maketitle

\begin{abstract}
The literature for count modeling provides useful tools to conduct causal inference when outcomes take non-negative integer values. Applied to the potential outcomes framework, we link the Bayesian causal inference literature to statistical models for count data. We discuss the general architectural considerations for constructing the predictive posterior of the missing potential outcomes. Special considerations for estimating average treatment effects are discussed, some generalizing certain relationships and some not yet encountered in the causal inference literature.
\end{abstract}

\section{Introduction}
Statistical analyses  with non-negative integer outcomes are encountered in many fields. For example, studies examining risk factors for seizure counts \citep{Burneo:2008}; number of deaths \citep{Baccini:2017}; counts of COVID-19 deaths \citep{Dominici:2020}, number of drinks over a period of time \citep{Horton:2007}. These outcomes have a particular domain, the set of non-negative integer values, $\mathbb{Z}_+$.

There is an extensive literature on statistical models for count data. Among some of the contributions to count regression model, the authors \cite{el1973bayesian,Lawless:1987,Diggle:1998,Winkelmann:2008,Chan2012CountingPW,KimSunduk:2013} study the estimation procedures of Poisson and negative binomial regression paradigms. \cite{Breslow:1984,Agresti:2007} introduced the lognormal-Poisson regression and their variants. These models have mostly been used to directly regress the observed outcomes on the observed treatment and background covariates, particularly in environmental epidemiology \citep{Gasparrini:2009,Zigler:2014,Schwartz:2015}, as opposed to causal modeling. Leveraging on the existing statistical literature, we link the Bayesian causal inference literature with models for count data. 
The fundamental problem of causal inference is one of missing data, and specifically of missing potential outcomes. One approach to handle the missing data problem of causal inference is multiple imputation, i.e., `fill in' missing data with credible values \citep{Rubin:1978,rubin:1987-buku} by repeatedly sampling from the predictive posterior of the missing potential outcomes. 

The idea of using a count distribution to model the potential outcomes is not new; \cite{Gutman:2017} proposed a procedure for estimating the causal effects of nursing home bed-hold policy and assumed a Poisson distribution for the potential outcomes. \cite{Sommer:2018} examined the causal effects of heat and rain on the number of crimes in Boston assuming directly a negative binomial distribution for the missing potential outcomes. 
Furthermore, these examples draw Bayesian inferences, which can be computationally intensive for large data sets. If the count model yields a likelihood function that is computationally expensive, the Bayesian perspective can quickly become prohibitive. One Markov chain Monte carlo (MCMC) method that is suited to imputation problems is the two-step process of data augmentation algorithm by  \cite{Tanner:1987}. However, for more realistic count models, the second step is typically intractable, and at least a two step MCMC procedure is still required to draw missing values, confer e.g., Chapter 8.3 of \cite{GelmanHill:2007}. Several threads of research have been devoted to examine the strategies for approximating draws in an imputation model, which include the large sample approximations and importance sampling \citep{rubin:1987-buku}. However, almost all these studies inherently assume continuity of the potential outcome distributions \citep{Schafer:1997,Shafer:2002}. To the best of our knowledge, little or no work has appeared on approximation strategies for an imputation model with count potential outcomes. 

In this paper, we initiate a framework for estimating causal inference within a Bayesian setting when the potential outcomes are counts. Although it may be possible to draw causal inferences using continuous or binary potential outcomes \citep{Rubin2006,hill2011bayesian,GutmanRubin:2012,Gutman-Rubin:2015,ImbensGuidoRubin:2015}, the paradigms of the aforementioned models are not generally suitable for count potential outcomes.
Here, we posit that the potential outcomes inherit statistical properties of counts (i.e., discrete non-negative integers) and can be characterized with distributions that account for overdispersion (e.g., the variance can be greater than the mean), which can occur in settings with
heterogeneous units or dependence between events. We remark that our proposed framework is flexible and can easily be parameterized to exploit a wide range of existing properties of count models, thereby could be adapted to each causal inference application. In addition to proposing a class of count potential outcomes, we add some new aspects to the existing theory by introducing some approximation strategies for drawing Bayesian causal inferences. Our approximation operates orders of magnitude faster than exact Hamiltonian MCMC (HMC). We derive an asymptotic expression for the accuracy of the approximation in terms of the total variation distance to the true posterior. Simulations show good finite sample performance. 

Determining the causal effects in this setting involves the following key steps: $(i)$ posit a suitable family of distributions for the potential outcomes and derive the conditional distribution of the missing potential outcomes given the observed data; $(ii)$ compute the posterior distribution of the parameters of the potential outcomes. The posterior distribution of Poisson parameters with Gaussian priors is intractable. However, we show a normal approximation to the posterior can be given and we derive an asymptotic rate of convergence of the approximation to the posterior in terms of the total variation distance; $(iii)$ we evaluate the conditional distribution of the missing data given the observed data, and finally $(iv)$ the estimand of causal interest is immediate. 

The paper is structured in the following manner. In Section \ref{sec:count-potential-outcomes} we introduce our non-negative potential outcomes framework. In Section \ref{sec:the-four-steps}, we develop a Bayesian imputation model for the missing potential outcomes of count type. Of particular interest is the characterization of approximation that is used to compute the posterior distribution towards the evaluation of causal estimands. Section \ref{sec:experiments} presents some numerical illustrations and Section \ref{sec:postlude} concludes.





\section{A Bayes model for count potential outcomes}
\label{sec:count-potential-outcomes}

The concept of potential outcomes was launched in \cite{splawa1990application} for Neymanian inference in randomized experiments and later used by other researchers including \cite{Kempthorne:1955} and \cite{Wilk:1955} for causal inference from randomized experiments. The concept was extended by  \cite{Rubin:1974,Rubin:1975,rubin:1976a,Rubin:1977,Rubin:1978} to other forms of causal inference from randomized experiments to observational studies \citep{Rubin:1974}. 

Following \cite{Rubin:1978}, $W_i$ denotes the treatment assignment for the $i^{th}$ unit, with $i=1,\hdots,N$, where $W_i=1$ indicates the active treatment and $W_i=0$ the control treatment. We also denote the potential outcomes had unit $i$ been assigned to active and control treatment by $\yone$ and $\yzero$, respectively. We define the unit-level causal effect of the binary treatment $W_i$ using the difference between $\yone$ and $\yzero$. We can never observe both $\yone$ and $\yzero$ for any unit $i$, because it is not possible to go back in time and expose the $i^\mathrm{th}$ unit to the other treatment. This is called the `fundamental problem of causal inference' \citep{Rubin:1975,Holland:1986}. Put differently, we are implicitly trying to figure out what would have happened to an individual had he/she taken the other treatment condition. Therefore, the problem of inferring unit-level causal effects is a missing data problem. The Bayesian framework permits the estimation of unit-level and average treatment effects \citep{Rubin:1978}. In this article, we focus on the finite population average treatment effect, defined as 
\begin{align}
\label{eq:taufs}
\textrm{ATE}:=\frac{1}{N}\sum_{i=1}^N(\yone-\yzero)
\end{align}
and we use the fact that
\begin{align}
\label{eq:identity}
\ymis = (1-W_i)\cdot\yone+W_i\cdot \yzero.
\end{align}

\paragraph{Non-negative integer valued potential outcomes.} We assume full compliance and the stable unit treatment value assumption (SUTVA) \citep{rubin1980randomization}, which means that the potential outcome of a particular unit depends only on the treatment combination it is assigned, rather than on the assignments of the remaining units, as well as that there are no hidden versions of the treatments not represented by the values of $W$. Under this assumption, the potential outcome $Y_i$ of each unit $i$ in an experiment only depends on whether it receives the treatment $(W=1)$ or not $(W=0)$. The potential outcomes of all $N$ units in an experiment, can be partitioned into two vectors of $N$ components: $\textbf{Y}(0)$ for all outcomes under control and $\textbf{Y}(1)$ for all outcomes under treatment. The unit-level potential outcomes can also be expressed as the observed and missing outcomes. Therefore, half of all the potential outcomes are observed, denoted by $\Yobs$; the other half are unobserved, denoted by $\Ymis$.

We propose a count model that allows for overdispersion:
	\begin{align}
	\label{eq:model}
	&Y_i(0)\,|\,\betac,\epsilon^{[c]}_i \sim \mathsf{Pois}(\mu_i^{[c]}),\quad Y_i(1)\,|\,\betat,\epsilon^{[t]}_i \sim \mathsf{Pois}(\mu_i^{[t]})
	\end{align}
	where $\mathsf{Pois}$ denotes the Poisson distribution, and
	\begin{align}\label{eq:model2}
	    \mu_i^{[c]}:=\exp ({\x^\top\betac})\epsilon^{[c]}_i, \quad \mu_i^{[t]}:=\exp ({\x^\top\betat})\epsilon^{[t]}_i.
	\end{align}
	The quantity $\epsilon^{[\cdot]}_i$ is taken to be a non-negative multiplicative random-effect term to model individual heterogeneity \citep{Long:1997,GelmanHill:2007,Winkelmann:2008,CameronTrivedi:2013,gelman2013bayesian} and their hyperparameters are denoted by $\vartheta^{[\cdot]}$. Let $\epsilon_i:=(\ec , \et)^{\top}$, $\epsall:=(\epsilon_1,\epsilon_2,\hdots,\epsilon_N)^{\top}$,
	and $\thetaall:=(\vartheta^{[c]}, \vartheta^{[t]})^{\top}$. The covariates $\x=(1,x_{i1},\hdots,x_{ik})$ are the $(k+1)$-dimensional features for every $i$ and let $\X=(\textbf{x}^{\top}_1,\hdots,\textbf{x}^{\top}_N)^{\top}$. We also have $\betac$ and $\betat\in\mathbb{R}^{k+1}$. We further assume a prior distribution $\pi(\betaall)$ for $\betaall$:
	\begin{align}
	\label{eq:prior for beta}
	&\betaall:=\begin{pmatrix}
	\betac\\
	\betat
	\end{pmatrix} \sim \mathsf{N}\left(\textbf{0}_{2(k+1)},\varbeta\cdot\mathbf{I}_{2(k+1)}\right)
	\end{align}
with $\sigma^2_{\beta}$ being a fixed positive number, $\mathsf{N}$ denoting the Gaussian distribution and $\betaall\in\mathbb{R}^{2(k+1)}$. In this paper, we examine two types of potential outcomes: (1) when $\epsall=1$, we have the Poisson potential outcomes that do not account for overdispersion in the count data; (2) we also work with a model that incorporate overdispersion, i.e., when $\epsilon^{[c]}$ and $\epsilon^{[t]}$ take lognormal priors. We call this the lognormal-Poisson potential outcomes, in concert with the terminology introduced in the context of the regression model \citep{Breslow:1984,Agresti:2007}. Specifically, we let $\epsilon^{[c]}\sim\log N(0,(\sigma^{[c]})^2)$ and 
$\epsilon^{[t]}\sim\log N(0,(\sigma^{[t]})^2)$. The hyperparameters for this model are $(\sigma^{[c]},\sigma^{[t]})$ and we place the priors
\begin{align}
\label{eq:priors_overdispersion}
    \sigma^{[c]}\sim\mathsf{IG}(\alpha^{[c]},\nu^{[c]})\,\,\text{and}\,\,\sigma^{[t]}\sim\mathsf{IG}(\alpha^{[t]},\nu^{[t]})
\end{align}
where the shorthand $\mathsf{IG}$ denotes the inverse gamma distribution that has density $\nu^{\alpha}/\Gamma(\alpha)(1/x)^{\alpha+1}\exp(-\nu/x)$
with parameters $\alpha$ and $\nu$ over the support $x>0$.
	
We remark that other forms of potential outcomes can be constructed in a similar manner, i.e., for example the negative binomial potential outcomes can be obtained by placing Gamma priors on $\epsilon^{[\cdot]}_i$ \citep{Lawless:1987,Hilbe:2007}. Another possible avenue is to attribute an inverse-Gaussian prior on $\epsilon^{[\cdot]}_i$ that would result in a heavy-tailed count behavior \citep{Dean:1989}. 

\paragraph{The assignment mechanism.} Drawing inferences for causal effects requires the specification of an assignment mechanism, that is, a probabilistic model for how experimental units are allocated to the treatment combination given the potential outcomes and covariates. Let $\W$ denote the $N$-vector of treatment assignments, with $i^{\mathrm{th}}$ element $W_i$. We define $N_c:=\sum_{i=1}^N(1-W_i)$ and $N_t=\sum_{i=1}^N W_i$ as the number of units assigned to the control and active treatment respectively, with $N_c+N_t=N$. We assume a completely randomized assignment mechanism, so by definition, the assignment mechanism is defined by
\begin{align}
\label{eq:ass-mechanism}
&\p(\mathbf{W}=\mathbf{w}\,|\,\mathbf{Y}(0),\mathbf{Y}(1),\X,\betaall,\epsall,\thetaall) = 
\begin{pmatrix}
N \\ N_t
\end{pmatrix}^{-1}
\end{align}
for all $\W$ such that $\sum_{i=1}^N W_i = N_t$, and $0$ otherwise. Any other ignorable assignment mechanism (i.e, $p(\W\,|\,\X,\Yobs,\Ymis)=P(\W\,|\,\X,\Yobs)$ \citep{Rubin:1978} could also be assumed.



\section{Estimating the average treatment effect}
\label{sec:the-four-steps}

We estimate the ATE by explicitly imputing the missing potential outcomes in a repeated fashion to account for the uncertainty in the imputation. Let $\ymishat$ be the imputed value corresponding to $\ymis$, then the ATE in~(\ref{eq:taufs}) can be estimated as
\begin{align}
\label{eq:atehat}
   \widehat{\textrm{ATE}}=\frac{1}{N}\sum_{i=1}^N\left((2W_i-1)(\yobs-\ymishat)\right).
\end{align}


We work within the model-based Bayesian causal inference framework  \citep{Rubin:1975,Rubin:1978} and evaluate a posterior predictive distribution for at least half of the missing potential outcomes. The fundamental idea is to initiate an imputation model for the missing potential outcomes $\Ymis$, conditionally on the observed outcomes $\Yobs$ and the observed assignment vector $\W$. As we outline, the predictive posterior of ${\bf Y}^{\textrm{mis}}$ can be computed using the HMC algorithm in a package such as \texttt{rstanarm} \citep{GelmanHill:2007,Goodrich:2020}. However, this approach can be computationally expensive for large $N$. To remedy this problem, we develop an approximation algorithm in this section, which is an order of magnitude faster and gives a good approximation for moderately large $Y_i^\text{obs}$.


We first present some necessary tools required to evaluate the posterior distribution of the parameters governing the distribution of the potential outcomes. This step is computationally intractable, so instead of computing the posterior distribution exactly, we propose an approximation. In Section \ref{sec:three-one}, we present results by \cite{Bartlett-Kendall:1946} and \cite{el1973bayesian}, and extend to these Lemma \ref{lemma:convergence1} and Corollary \ref{cor:TV}, which we will use in the next subsection a new expression for the asymptotic total variation distance between our approximation and the true posterior. In Section \ref{sec:three-two}, we derive the posterior distribution of the causal estimand in four steps and in Section \ref{sec:three-three}, we compute the ATE for Poisson and lognormal-Poisson potential outcomes.

\subsection{Convergence step to derive the imputation model and divergence considerations}
\label{sec:three-one}


The main purpose of this section is to derive Corollary 1, which will be useful in the derivation of the predictive posterior of $\Ymis$. First, we note that if ${X}$ is a Gamma random variable with density function $\mathsf{Ga}(r,s)$
\begin{align*}
\frac{1}{\Gamma(r)s^r}x^{r-1}\exp\left({-\frac{x}{s}}\right),\quad x> 0,
\end{align*}
then a transformation of ${Y}:=\log {X}$ yields the log-Gamma distribution which has density
\begin{align*}
\frac{1}{\Gamma(r)s^r}\exp\left({ry-\frac{e^y}{s}}\right),\quad y>0.
\end{align*}
Denoting the log-Gamma distribution as $Y\sim\mathsf{logGamma}(r,s)$, it is well known that ${Y}$ is \emph{approximately} Gaussian distributed with mean $\log rs$ and variance $\log r^{-1}$ for large $r$ \citep{Bartlett-Kendall:1946}.
Leveraging on this result, \cite{el1973bayesian} noted that the variables and parameters of the log-Gamma and Poisson distributions are related in the following sense:
\begin{align*}
f_\text{\rm Poisson}(y\,|\,\mu\equiv e^{\xi})=\frac{e^{-\mu}\mu^y}{y!}=\frac{1}{y}\cdot\frac{e^{-e^{\xi}}(e^{\xi})^y}{\Gamma(y)}&=\frac{1}{y}\cdot f_\text{\rm logGamma}(\xi\,|\,y,1)\nonumber\\
&\overset{(\ast)}{\approx} \frac{1}{y}\cdot f_\text{\rm Normal} (\xi\,|\,\log y,y^{-1}).
\end{align*}

We now derive the total variation distance between the log-Gamma and Poisson distributions as $y\to\infty$.
The Kullback-Leibler divergence between ${f}$ and ${g}$, denoted by $\mathbb{D}({f}\,\|\,{g})$ is given by $\int {f}(x)\log(\frac{{f}(x)}{{g}(x)})dx$. Direct calculation shows that
\begin{align}
\label{eq:kl-exact}
&\mathbb{D}\left(f_{\textrm{Normal}}(\xi\,|\,\log y,y^{-1})\, \|\ \, f_{\textrm{logGamma}}(\xi\,|\,y,1)\right)\nonumber\\
&\qquad\qquad\qquad=\log\bigg(\frac{\Gamma(y)}{\sqrt{2\pi y^{-1}}}\bigg)-y\log y +ye^{\frac{1}{2y}}-\frac{1}{2}
\end{align}
as well as $\mathbb{D}(f_{\textrm{normal}}(\xi\,|\,\log y,y^{-1})\, \|\ \, f_{\textrm{log-Gamma}})\rightarrow 0 $ as $y\rightarrow\infty$.
We present the following:
\begin{lemma}
\label{lemma:convergence1}
\emph{
It holds true that
\begin{align*}
\mathbb{D}\left(f_{\textrm{Normal}}(\xi\,|\,\log y,y^{-1})\, \|\ \, f_{\textrm{logGamma}}(\xi\,|\,y,1)\right)=\frac{5}{24y}+\mathcal{O}\left(y^{-2}\right).
\end{align*}
}
\end{lemma}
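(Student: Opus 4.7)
The plan is to start from the exact expression for the Kullback–Leibler divergence already derived immediately above the lemma, namely
\begin{align*}
\mathbb{D}\left(f_{\textrm{Normal}}(\xi\,|\,\log y,y^{-1})\, \|\ \, f_{\textrm{logGamma}}(\xi\,|\,y,1)\right)=\log\!\bigg(\frac{\Gamma(y)}{\sqrt{2\pi y^{-1}}}\bigg)-y\log y +ye^{\frac{1}{2y}}-\frac{1}{2},
\end{align*}
and perform two asymptotic expansions in $1/y$, then collect terms. No probabilistic machinery is required beyond what has already appeared in the paper; the result is purely analytic.

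First, I would rewrite the logarithm term as $\log\Gamma(y)-\tfrac{1}{2}\log(2\pi)+\tfrac{1}{2}\log y$ and invoke Stirling's series,
\begin{align*}
\log\Gamma(y)=\left(y-\tfrac{1}{2}\right)\log y - y + \tfrac{1}{2}\log(2\pi)+\frac{1}{12 y}+\mathcal{O}\!\left(y^{-3}\right),
\end{align*}
which yields
\begin{align*}
\log\!\bigg(\frac{\Gamma(y)}{\sqrt{2\pi y^{-1}}}\bigg)=y\log y - y + \frac{1}{12 y}+\mathcal{O}\!\left(y^{-3}\right).
\end{align*}
Second, I would Taylor expand the exponential
\begin{align*}
y\,e^{1/(2y)}=y\left(1+\frac{1}{2y}+\frac{1}{8 y^{2}}+\mathcal{O}\!\left(y^{-3}\right)\right)=y+\frac{1}{2}+\frac{1}{8 y}+\mathcal{O}\!\left(y^{-2}\right).
\end{align*}

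Substituting both expansions back into the exact formula, the contributions of order $y\log y$, $y$, and the constants $\tfrac{1}{2}$ cancel, leaving
\begin{align*}
\frac{1}{12 y}+\frac{1}{8 y}+\mathcal{O}\!\left(y^{-2}\right)=\frac{2}{24 y}+\frac{3}{24 y}+\mathcal{O}\!\left(y^{-2}\right)=\frac{5}{24 y}+\mathcal{O}\!\left(y^{-2}\right),
\end{align*}
which is the claimed identity. The only delicate point is bookkeeping the truncation errors consistently at order $y^{-2}$: the Stirling tail is $\mathcal{O}(y^{-3})$ and the exponential tail is $\mathcal{O}(y^{-2})$, so the coarser of the two governs the remainder. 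This is essentially the whole argument, and the main obstacle is simply tracking which terms contribute at order $1/y$ versus higher orders rather than any conceptual difficulty.
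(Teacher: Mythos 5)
Your proof is correct and follows essentially the same route as the paper's: expand $\log\Gamma(y)$ via Stirling's series and Taylor-expand $ye^{1/(2y)}$ in the exact KL expression, then collect the $1/y$ terms $\tfrac{1}{12y}+\tfrac{1}{8y}=\tfrac{5}{24y}$. If anything, your version is more explicit than the paper's, which displays Stirling's formula only to leading order $(1+\mathcal{O}(z^{-1}))$ and defers to the full Stirling series by citation, whereas you correctly make the needed $\tfrac{1}{12y}$ term visible.
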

\begin{proof}
Stirling's formula gives
\begin{align}
\label{eq:stirling-orig}
\Gamma(z)= \sqrt{\frac{2\pi}{z}}\left(\frac{z}{e}\right)^z\left(1+\mathcal{O}\left(z^{-1}\right)\right).
\end{align}
Expand the the Gamma function in (\ref{eq:kl-exact}) is expanded through Stirling's series \citep{Uhler59,Arfken67:buku} for its associated formula in (\ref{eq:stirling-orig}) and using the fact that $y(e^{\frac{1}{2y}}-1)\rightarrow\nicefrac{1}{2}$ when $y\rightarrow\infty$ yields the result.
\end{proof}

An alternative measure of deviation between probability measures is the total variation (TV) distance \citep{LevinPeresWilmer2006}, defined for measures $P$ and $Q$ as
\[  \text{TV}(P\|Q) = \sup_E |P(E)-Q(E)| \]
where the supremum is over all possible events $E$. Pinsker's inequality states that $\text{TV}(P\|Q)\le \sqrt{\mathbb{D}(P\|Q)}$. Using a Taylor expansion for the square root function, Lemma~\ref{lemma:convergence1} implies:
\begin{corollary}
\label{cor:TV}
\emph{
It holds true that
\begin{align*}
\text{\rm TV}\left(f_{\textrm{Normal}}(\xi\,|\,\log y,y^{-1})\, \| \, f_{\textrm{logGamma}}(\xi\,|\,y,1)\right)=\sqrt{\frac{5}{24y}}+\mathcal{O}\left(y^{-3/2}\right).
\end{align*}
}
\end{corollary}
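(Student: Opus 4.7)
The plan is to proceed along the lines suggested by the sentence immediately preceding the corollary: apply Pinsker's inequality, $\mathrm{TV}\le\sqrt{\mathbb{D}}$, to the Kullback--Leibler expansion of Lemma~\ref{lemma:convergence1}, and then Taylor-expand the resulting square root. Concretely, substituting $\mathbb{D}=5/(24y)+\mathcal{O}(y^{-2})$ and factoring gives $\sqrt{\mathbb{D}}=\sqrt{5/(24y)}\,\sqrt{1+\mathcal{O}(y^{-1})}$, after which $\sqrt{1+x}=1+x/2+\mathcal{O}(x^{2})$ as $x\to 0$ yields $\sqrt{\mathbb{D}}=\sqrt{5/(24y)}+\mathcal{O}(y^{-3/2})$. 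This establishes the upper-bound direction of the claimed equality.

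Because the corollary is stated as an asymptotic equality rather than a one-sided bound, a matching lower bound of the same leading constant is also required. For this I would carry out a direct asymptotic computation of the $L^{1}$-distance between the two densities via the standardizing change of variables $z=\sqrt{y}(\xi-\log y)$: the Normal density becomes the standard normal $\varphi(z)$, while Stirling's series (as already used in Lemma~\ref{lemma:convergence1}) together with the Taylor expansion of $y\,e^{z/\sqrt{y}}$ gives $q(z)=\varphi(z)\bigl(1+a(z)/\sqrt{y}+b(z)/y+\mathcal{O}(y^{-3/2})\bigr)$, where $a(z)=-z^{3}/6$ is the leading skewness correction coming from the cubic term in $y\,e^{z/\sqrt{y}}$ and $b(z)$ is an explicit polynomial built from the next Stirling term, the quartic in $e^{\xi}$, and the square of the skewness correction. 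Integrating $\tfrac12\int|\varphi(z)-q(z)|\,dz$ to leading order then yields a $c/\sqrt{y}$ behavior; identifying $c$ with $\sqrt{5/24}$ would close the argument, with the $\mathcal{O}(y^{-3/2})$ remainder following from dominated convergence applied to the higher-order terms in the expansion.

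The main obstacle, and the delicate part, is this constant identification. Pinsker's inequality is known not to be asymptotically tight in general, so there is no automatic reason for the leading coefficient of the direct $L^{1}$-computation to agree with the square root of the leading Kullback--Leibler coefficient. Tracking the mean shift (of order $1/y$), variance mismatch (of order $y^{-2}$), cubic skewness term $a(z)$, and Stirling remainder simultaneously, and handling the sign pattern of $\varphi(z)-q(z)$ inside the absolute value, is where any discrepancy in the claimed constant would appear. If the direct computation produces a strictly smaller leading coefficient, the corollary should be read as an upper bound rather than an equality, with the $\mathcal{O}(y^{-3/2})$ error replaced by a genuine $\mathcal{O}(y^{-1/2})$ gap between the Pinsker bound and the true $L^{1}$-leading order.
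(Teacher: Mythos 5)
Your first paragraph is, essentially verbatim, the paper's entire proof of Corollary~\ref{cor:TV}: invoke Lemma~\ref{lemma:convergence1}, apply Pinsker's inequality, and use $\sqrt{t+\mathcal{O}(t^2)}=\sqrt{t}+\mathcal{O}(t^{3/2})$. The paper stops there and does not attempt any lower bound.

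Your second and third paragraphs therefore identify a genuine gap in the paper's own argument, and your suspicion about the constant is correct rather than merely cautious. For a family of the form $q = \varphi\,(1+\varepsilon h + o(\varepsilon))$ with $\int \varphi h = 0$, one has $\mathrm{TV}=\tfrac{\varepsilon}{2}\int\varphi|h|+o(\varepsilon)$ while $\mathbb{D}=\tfrac{\varepsilon^2}{2}\int\varphi h^2+o(\varepsilon^2)$, so $\sqrt{\mathbb{D}}=\tfrac{\varepsilon}{\sqrt{2}}\,\|h\|_{L^2(\varphi)}$; since $\tfrac12\|h\|_{L^1(\varphi)}<\tfrac{1}{\sqrt2}\|h\|_{L^2(\varphi)}$ by Cauchy--Schwarz together with the factor $\tfrac12<\tfrac{1}{\sqrt2}$, Pinsker is never asymptotically tight for such smooth perturbations. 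Here $\varepsilon=y^{-1/2}$ and $h$ is the Edgeworth-type correction (skewness plus the $\mathcal{O}(y^{-1})$ mean shift of the log-Gamma), so the true leading coefficient of $\mathrm{TV}$ is strictly smaller than $\sqrt{5/24}$, and the corollary as an asymptotic \emph{equality} is false. What the paper actually establishes is the one-sided statement $\mathrm{TV}\le\sqrt{5/(24y)}+\mathcal{O}(y^{-3/2})$, i.e.\ $\mathrm{TV}=\mathcal{O}(y^{-1/2})$, which is also all that is needed downstream: the proof of Lemma~\ref{lem:law of betaall} only uses $\int\delta_i(\xi)\,d\xi=\mathcal{O}(y_i^{-1/2})$, not the value of the constant. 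So your proposed reading of the result as an upper bound is the correct one, and your planned $L^1$ computation, while not required for the rest of the paper, is what a proof of a genuine equality would have to contain.
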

\begin{proof}
A Taylor expansion around zero gives $\sqrt{1+\mathcal{O}(t)}=1+\mathcal{O}(t)$ as $t\to 0$. Hence we have that
\begin{align*}
\sqrt{t+\mathcal{O}(t^2)}&=\sqrt{t}\sqrt{1+\mathcal{O}(t)}=\sqrt{t}+\mathcal{O}(t^{3/2})    
\end{align*}
as $t\to 0$. Therefore, $\sqrt{y^{-1}+\mathcal{O}(y^{-2})}=y^{-1/2}+\mathcal{O}(y^{-3/2})$ as $y\to\infty$, and the corollary immediately follows from Lemma~\ref{lemma:convergence1} using Pinsker's inequality.
\end{proof}

\subsection{Imputation algorithm}
\label{sec:three-two}
We now return to our main topic, namely estimating the finite population average treatment effect. We adapt our convergence results to the setting of the proposed count potential outcomes framework. The derivation of the posterior distribution of the average treatment effect entails the following steps:
\begin{itemize}
	\item Step $(i)$. Evaluate the conditional distribution of $\Ymis$ given $\Yobs$, $\X$, $\W$, $\betaall$, $\epsall$, and $\thetaall$.
	\item Step $(ii)$. Evaluate the conditional joint distribution for the parameters $\betaall,\epsall,\thetaall$ given $\Yobs$,  $\W$, and $\X$.
	\item Step $(iii)$ Evaluate the distribution of $\Ymis\,|\,\Yobs,\W,\X$, which is the desired imputation distribution, by marginalizing the posterior predictive distribution in step $(ii)$ over $\boldsymbol{\epsilon}$, the parameter vector $\betaall$, and the hyperparameter vector $\thetaall$. 
	\item Step $(iv)$ Finally, the estimated average treatment effect is computed.
\end{itemize}

Using (\ref{eq:identity}), the conditional distribution of $\ymis$ given $\Yobs,\W,\X,\betaall,\epsall,\thetaall$ can be expressed as:
\begin{align}
\label{eq:step1}
\ymis\,|\,\Yobs,\W,\X,\betaall,\epsall,\thetaall=\mathsf{Pois}(\mumis)
\end{align}
where $\mumis:=\exp(\ximis)$ with
\begin{align}\label{ximis}
\ximis:=\left((1-W_i)\cdot\{\xit+\log \et\} + W_i\cdot\{\xic+\log \ec\}\right).
\end{align}

We immediately see from~(\ref{eq:step1}) and~(\ref{ximis}) that the conditional distribution of $\ymis$ given $\Yobs,\W,\X,\betaall,\epsall,\thetaall$ is simply equal to the marginal distribution $\ymis$ given $\W,\X,\betaall,\epsall,\thetaall$ due to the independence of the potential outcomes. Furthermore, due to the binary nature of $W_i$, we have the following equivalent expressions:
\begin{align*}
&W_i\cdot e^{{\xic}}\ec + (1-W_i)\cdot e^{\xit}\et\\
=&\exp\left((1-W_i)\cdot\{\xit+\log \et\} + W_i\cdot\{\xic+\log \ec\}\right).
\end{align*}
In order to carry out Step $(ii)$, we first need to compute the conditional distribution of $\Yobs$ given $\X,\betaall,\epsall$, and $\thetaall$. Using the fact that $\yobs = (1-W_i)\cdot\yzero+W_i\cdot \yone$, we see that
\begin{align}
\label{eq:pyobs-given-everything}
    \yobs\,|\,\W,\X,\betaall,\epsall,\thetaall = \mathsf{Pois}(\muobs)
\end{align}
where $\muobs:=\exp(\xiobs)$ with
\begin{align*}
\xiobs=(1-W_i)\{\xic+\log \ec\} + W_i\cdot\{\xit+\log\et\}.    
\end{align*}
With $\otimes$ denoting the Kronecker product, it is readily computed that
\begin{align*}
    \xiobs = \left(
    \begin{bmatrix}
    (1-W_i) && W_i
    \end{bmatrix}\otimes \x^{\top}\right)
    \begin{pmatrix}
    \betac \\ \betat
    \end{pmatrix} + \begin{bmatrix}
    (1-W_i) && W_i
    \end{bmatrix}\begin{pmatrix}
    \log\ec \\ \log\et
    \end{pmatrix}.
    \end{align*}
Define $\tilde{W}^{\textrm{obs}}_i:=\begin{bmatrix}(1-W_i) & W_i \end{bmatrix}$, $\xitilde:=
\tilde{W}^{\textrm{obs}}_i\otimes \x^{\top}$, $\tilde{m}^{\textrm{obs}}_i:=\tilde{W}^{\textrm{obs}}_i\tilde{\epsilon}_i$, and $\epstilde_i:=(\log \ec \,\, \log \et)^{\top}$, we can re-express $\xiobs$ succinctly as follows 
\begin{align}
\label{eq:xiobs-neat}
\xiobs=\xitilde\betaall + m^{\textrm{obs}}_i.
\end{align}


Due to the choice of our count potential outcomes paradigm, an analytical expression for the posterior for $\betaall$ is not available owing to the lack of conjugacy between the Gaussian priors and the Poisson likelihood. However, we now give a result that applies to the approximate $\pi(\betaall\,|\,\Yobs,\W,\X,\epsall,\thetaall)$, the posterior distribution of $\betaall$ conditional on $\Yobs$, its assignment mechanism $\W$, $\X$, $\epsall$ and its hyperparameters $\thetaall$. Some additional notation is warranted. Let $r_i:=\log \yobs-m^{\textrm{obs}}_i$,
\begin{align}
    \Xtilde^{\textrm{obs}}:=\begin{pmatrix}
    \xtilde^{\textrm{obs}}_1 \\ \xtilde^{\textrm{obs}}_2 \\ \vdots \\ \xtilde^{\textrm{obs}}_n
    \end{pmatrix},\,\,\,\Rtilde^{\textrm{obs}}:=\begin{pmatrix}
    r_1 \\ r_2 \\ \vdots \\ r_n
    \end{pmatrix},\,\,\, \text{and } \boldsymbol{\Sigma}_y^{{\textrm{obs}}}:=\begin{bmatrix}
    {(Y^{\mathrm{obs}}_1)^{-1}} & & \\
    & \ddots & \\
    & & {(Y^{\mathrm{obs}}_n)^{-1}}
  \end{bmatrix}.
\end{align}

\begin{lemma}
\label{lem:law of betaall}
\emph{
It holds true that
\begin{align*}
TV\left(p(\betaall\,|\,\Yobs,\W,\X,\epsall,\thetaall)\,,\,\mathsf{N}(\mu^{\boldsymbol{\epsilon}}_{\betaall},\Sigma_{\betaall})\right) = \mathcal{O}\Big(\sum_{i=1}^N y_i^{-3/2}\Big)
\end{align*}
where
\begin{align*}
    \mu^{\boldsymbol{\epsilon}}_{\betaall}&=\Sigma_{\betaall}(\Xtilde^{{\textrm{obs}}})^{\top}\,(\boldsymbol{\Sigma}_y^{{\textrm{obs}}})^{-1}(\tilde{{\textbf{Y}}}-\textbf{M}^{\textrm{obs}}_{\boldsymbol{\epsilon}}),\\
    \Sigma^{-1}_{\betaall}&=(\Xtilde^{{\textrm{obs}}})^{\top} (\boldsymbol{\Sigma}_y^{{\textrm{obs}}})^{-1}\Xtilde+ {(\varbeta)}^{-1}\mathbf{I}_{2(k+1)}
\end{align*}
with $\boldsymbol{\Sigma}_y^{{\textrm{obs}}}=\mathrm{diag}\left[(Y^{\mathrm{obs}}_1)^{-1},\hdots,(Y^{\mathrm{obs}}_N)^{-1}\right]$.
}
\end{lemma}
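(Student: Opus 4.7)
The plan is to recognize the posterior of $\betaall$ as a Gaussian posterior in an approximate Bayesian linear regression obtained by replacing each Poisson likelihood factor with its Gaussian surrogate via Corollary~\ref{cor:TV}, and then to bound the resulting TV error by propagating the per-observation approximations through the posterior normalization.

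By Bayes' rule,
\begin{equation*}
p(\betaall\,|\,\Yobs,\W,\X,\epsall,\thetaall)\;\propto\;\pi(\betaall)\,\prod_{i=1}^N f_{\textrm{Poisson}}\!\left(Y_i^{\textrm{obs}}\,\big|\,\exp(\xiobs)\right),
\end{equation*}
with $\xiobs=\xitilde\betaall+m_i^{\textrm{obs}}$ linear in $\betaall$ by~(\ref{eq:xiobs-neat}). Using the identity $f_{\textrm{Poisson}}(y\,|\,e^{\xi})=(1/y)f_{\textrm{logGamma}}(\xi\,|\,y,1)$ from Section~\ref{sec:three-one} and then Corollary~\ref{cor:TV} to replace each $f_{\textrm{logGamma}}$ by $f_{\textrm{Normal}}(\cdot\,|\,\log y,y^{-1})$, I obtain a surrogate posterior $\tilde p(\betaall\,|\,\cdot)$ proportional to $\pi(\betaall)\prod_i f_{\textrm{Normal}}\!\left(\xitilde\betaall+m_i^{\textrm{obs}}\,\big|\,\log Y_i^{\textrm{obs}},(Y_i^{\textrm{obs}})^{-1}\right)$. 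Rewriting this in terms of the residuals $r_i=\log Y_i^{\textrm{obs}}-m_i^{\textrm{obs}}$, it is exactly a conjugate Gaussian linear regression with response $\Rtilde^{\textrm{obs}}$, design $\Xtilde^{\textrm{obs}}$, heteroscedastic noise covariance $\boldsymbol{\Sigma}_y^{\textrm{obs}}$, and Gaussian prior $\mathsf{N}(\mathbf{0},\varbeta\mathbf{I}_{2(k+1)})$. Completing the square in $\betaall$ recovers the stated mean $\mu^{\boldsymbol{\epsilon}}_{\betaall}$ and precision $\Sigma^{-1}_{\betaall}$ as the standard ridge-regression formulas.

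For the TV bound I would use the elementary fact that when two posteriors share a prior $\pi$ and differ only in the likelihoods $L,\tilde L$, the TV distance between them is controlled by $\int\pi\,|L-\tilde L|/\int\pi L$. A telescoping triangle inequality then replaces one Poisson factor at a time by its Gaussian surrogate, reducing the problem to a sum over $i$ of the per-observation TV distance between $f_{\textrm{Poisson}}(y_i\,|\,e^{\xiobs})$ and its Gaussian surrogate viewed as (unnormalized) densities in $\betaall$; each summand is controlled by the one-dimensional TV in $\xi$ given by Corollary~\ref{cor:TV}.

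The main obstacle is precisely that a naive application of Corollary~\ref{cor:TV} only delivers $\mathcal{O}(\sum y_i^{-1/2})$, while the lemma claims $\mathcal{O}(\sum y_i^{-3/2})$. The natural explanation is that the leading $\sqrt{5/(24y_i)}$ contribution corresponds to a location correction in $\xi$ that is already absorbed by the surrogate's mean $\log Y_i^{\textrm{obs}}$, so that only the $\mathcal{O}(y_i^{-3/2})$ remainder of Corollary~\ref{cor:TV} actually feeds into the TV distance on $\betaall$ after the conjugate update. Making this rigorous requires revisiting the Stirling expansion behind Lemma~\ref{lemma:convergence1}, isolating the first-order-in-$y_i^{-1}$ correction to $\log f_{\textrm{Poisson}}(y_i\,|\,e^{\xi})-\log f_{\textrm{Normal}}(\xi\,|\,\log y_i,y_i^{-1})$ around $\xi=\log y_i$, and verifying that this correction is affine in $\xi$ so that its effect on the posterior is captured exactly by the Gaussian mean-shift rather than contributing to the TV error.
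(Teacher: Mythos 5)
Your reduction of the posterior to a conjugate Gaussian linear regression is exactly the paper's route: the paper likewise substitutes $f_{\textrm{Normal}}(\xiobs\,|\,\log y_i,y_i^{-1})+\delta_i(\xiobs)$ for each log-Gamma factor and reads off $\mu^{\boldsymbol{\epsilon}}_{\betaall}$ and $\Sigma_{\betaall}$ from the resulting product of Gaussians (you are in fact more explicit than the paper about the completing-the-square step). The difficulty you flag is real, and it is the crux: Corollary~\ref{cor:TV} controls each factor only to order $y_i^{-1/2}$, so a telescoping, one-factor-at-a-time bound yields $\mathcal{O}(\sum_i y_i^{-1/2})$, not the claimed $\mathcal{O}(\sum_i y_i^{-3/2})$. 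The paper's own proof does not close this gap either --- it records $\int\delta_i(\xi)\,d\xi=\mathcal{O}(y_i^{-1/2})$ and then simply asserts the $\sum_i y_i^{-3/2}$ rate in the next line --- so you have not missed a step that the authors actually supply.

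However, the repair you sketch does not work as stated. Writing $\xi=\log y+u/\sqrt{y}$ and using Stirling, one finds
\begin{align*}
\log f_{\textrm{logGamma}}(\xi\,|\,y,1)-\log f_{\textrm{Normal}}(\xi\,|\,\log y,y^{-1})=-\frac{u^{3}}{6\sqrt{y}}-\frac{1}{12y}-\frac{u^{4}}{24y}+\mathcal{O}\left(y^{-3/2}\right),
\end{align*}
so the leading $y^{-1/2}$ discrepancy is the cubic skewness term $-y(\xi-\log y)^{3}/6$, not an affine function of $\xi$; it cannot be absorbed into the Gaussian mean. (A location mismatch is also present, since the log-Gamma mean is $\psi(y)\approx\log y-\tfrac{1}{2y}$ rather than $\log y$, but it accounts for only part of the $\sqrt{5/(24y)}$ leading constant in Corollary~\ref{cor:TV}.) To beat $\sum_i y_i^{-1/2}$ one would need an Edgeworth-type argument showing that these odd cubic corrections largely cancel after aggregating the $N$ factors and renormalizing the posterior; neither your proposal nor the paper carries this out, so the stated $\mathcal{O}(\sum_i y_i^{-3/2})$ rate remains unproved in both.
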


\begin{proof}
From Corollary~\ref{cor:TV}, there exists a function $\delta_i$ with the following
\begin{align*}
 \int\delta_i(\xi)d\xi=\mathcal{O}(y_i^{-1/2}) 
\end{align*}
such that
\[   f_{\textrm{logGamma}}(\xi_i\,|\,y_i,1) = f_{\textrm{Normal}}(\xi_i\,|\,\log y_i,y_i^{-1}) + \delta_i(\xi_i).   \]
Hence,
\begin{align}
    \pi(\betaall\,|\,\Yobs,\W,\X,\epsall,\thetaall)&\propto 
    p(\Yobs\,|\,\W,\X,\betaall,\epsall,\thetaall)\pi(\betaall)\nonumber\\
    &\overset{\ast}{=}\prod_{i=1}^n [f_{\mathsf{Normal}}(\xiobs\,|\,\log y_i,y^{-1}_i)+ \delta_i(\xiobs)]\pi(\betaall)\cdot\frac{1}{y_i}\nonumber\\
    & =f_{\textsf{Normal}}(\mu^{\boldsymbol{\epsilon}}_{\betaall},\Sigma_{\betaall})) +\mathcal{O}\Big(\sum_{i=1}^N y_i^{-3/2}\Big)\nonumber
\end{align}
where ${\ast}$ follows from Corollary~\ref{cor:TV} and $\delta_i'$ is a quantity such that $\int\delta_i'(\xi)d\xi=\mathcal{O}(y_i^{-1/2})$. Hence, the TV distance between the posterior $\pi(\beta|\,\cdot)$ and the $f_{\mathsf{normal}}(\cdot)$ term is of order $\mathcal{O}\Big(\sum_{i=1}^N y_i^{-3/2}\Big)$. \ok


\end{proof}

We turn our attention to investigate the conditional distribution of $\mumis$ given $\Yobs,\W,\X,\boldsymbol{\epsilon},\thetaall$.  This is important to aid the computations for Step $(iv)$. Let
$\xitildemis:=\begin{bmatrix}
W_i & (1-W_i) 
\end{bmatrix}\otimes\x^{\top}$ and $\tilde{m}^{\textrm{mis}}_i:=\begin{bmatrix}
W_i & (1-W_i) 
\end{bmatrix}\tilde{\epsilon}_i$. Then, we may write $\ximis=\xitildemis\betaall + \tilde{m}^{\textrm{mis}}_i$ and establish that 
\begin{align*}
    \ximis\,|\,\Yobs,\W,\X,\boldsymbol{\epsilon},\thetaall\sim\mathsf{N}(\xitildemis\mu^{\boldsymbol{\epsilon}}_{\betaall}+\tilde{m}^{\textrm{mis}}_i,{(\xitildemis)}^{\top}\Sigma_{\betaall}\xitildemis)
\end{align*}
by Lemma \ref{lem:law of betaall}. Hence $\ximis\,|\,\Yobs,\W,\X,\boldsymbol{\epsilon},\thetaall$ is approximately log-Gamma from Lemma \ref{lemma:convergence1} with parameters $((\xitildemis)^{\top}\Sigma_{\betaall}\xitildemis)^{-1}$ and $(\xitildemis)^{\top}\Sigma_{\betaall}\xitildemis\,\exp(\xitildemis\mu^{\boldsymbol{\epsilon}}_{\betaall}+\tilde{m}^{\mathrm{mis}}_i)$. Since $\mumis=\exp(\ximis)$, we see that the conditional distribution of $\mumis$ given $\Yobs,\W,\X,\boldsymbol{\epsilon},\thetaall$ is approximately Gamma distributed ($\mathsf{Ga}(\cdot,\cdot)$), i.e.,
\begin{align}
\label{eq:mumis-distribution}
\mumis\,|\,\Yobs,\W,\X,\boldsymbol{\epsilon},\thetaall\sim\mathsf{Ga}(\gamma_i,h_i)
\end{align}
where
\begin{align}
\label{eq:mumis-distribution-2}
\gamma_i:=((\xitildemis)^{\top}\Sigma_{\betaall}\xitildemis)^{-1}, \quad  h_i:=(\xitildemis)^{\top}\Sigma_{\betaall}\xitildemis\,\exp(\xitildemis\mu^{\boldsymbol{\epsilon}}_{\betaall}+\tilde{m}^{\textrm{mis}}_i).
\end{align}

\fbox{\begin{minipage}{33em}
\label{procedure1}
\emph{Procedure 1}: Sampling (${\betaall}, \epsall, \thetaall$).
\\
{Input}: $N$ units of training samples of the form ($\Yobs, \textbf{x}_i, W_i$) where $\textbf{x}_i$ are covariates, $\Yobs$ is the observed outcome, and $W_i$ is the treatment assignment
\begin{enumerate}[noitemsep,topsep=0pt,label*=\arabic*.]
    \item Initialize a list $\mathcal{S}$ of samples as empty
    \item Randomly initialize ($\betaall^0, \epsall^0, \thetaall^0$) and append to $\mathcal{S}$
    \item Repeat the following steps until the maximum number of iteration is reached
    \begin{enumerate}[noitemsep,topsep=0pt,label*=\arabic*.]
        \item Draw a random sample of $\betaall$ using Lemma \ref{lem:law of betaall} conditioned on the latest sample of $\epsall$ and $\thetaall$ in the list $\mathcal{S}$.
        \item Draw a random sample of $\epsall$ using the density in equation (\ref{eq:marginal_posterior}) conditioned on the latest sample of $\betaall$ and $\thetaall$ in the list $\mathcal{S}$.
        \item Draw a random sample of $\thetaall$ from the density in equation (\ref{eq:marginal_posterior})  conditioned on the latest sample of $\betaall$ and $\epsall$ in the list $\mathcal{S}$.
    \end{enumerate}
    \item Discard the first $m$ burn-in samples in $\mathcal{S}$ so that $|\mathcal{S}|=R$ which we call $\mathcal{S}^R$
    \item Return the samples of the joint posterior $\mathcal{S}^R$
\end{enumerate}
\end{minipage}}

\vspace{2mm}

We now outline an MCMC scheme to sample from the posterior predictive of $\Ymis$ conditional on $\Yobs$, $\W$ and its covariates. We can express the following
\begin{equation}
\label{eq:step3-useful}
\begin{aligned}
&\p(\ymis=y\,|\,\Yobs,\W,\X)\\
&=\int\int\int {p(Y_i^{\mathrm{mis}}=y\,|\,\Yobs,\W,\X,\epsall,\betaall,\thetaall)}\cdot {p(\epsall,\betaall,\thetaall\,|\,\Yobs,\W,\X)}d\epsall\,d\betaall\,d\thetaall.
\end{aligned}
\end{equation}
The quantities $\betaall,\epsall$ and $\thetaall$ are sampled from the joint posteriors, which is denoted by  
$p(\epsall,\betaall,\thetaall\,|\,\Yobs,\W,\X)$ using a Gibbs sampler from their marginal posteriors with the following densities
\begin{align}
\label{eq:marginal_posterior}
    p(\betaall\,|\,\epsall,\thetaall,\Yobs,\W,\X),\,\,\,\,\, p(\epsall\,|\,\betaall,\thetaall,\Yobs,\W,\X)\,\,\,\,\text{and}\,\,\,\,
    p(\thetaall\,|\,\epsall)
\end{align}
where $p(\betaall\,|\,\epsall,\thetaall,\Yobs,\W,\X)$ is Gaussian from Lemma \ref{lem:law of betaall} and the posterior distribution for the hyperparameters $p(\thetaall\,|\,\epsall)$ is known. We further remark that $p(\thetaall\,|\,\epsall)$ does not depend on $\betaall$ but depends implicitly on $\Yobs,\W$ and $\X$ through $\epsall$. Depending on the choice of $\epsall$, $p(\epsall\,|\,\betaall,\thetaall,\Yobs,\W,\X)$ may be tractable, as we shall see from the case of lognormal-Poisson potential outcomes in the next subsection. A summary of the steps needed to sample $p(Y_i^{\mathrm{mis}}=y\,|\,\Yobs,\W,\X,\epsall,\betaall,\thetaall)$ is given in Procedure 1. Once $\betaall,\epsall$ and $\thetaall$ are known, use these values to further sample ${p(Y_i^{\mathrm{mis}}=y\,|\,\Yobs,\W,\X,\epsall,\betaall,\thetaall)}$ from a Poisson law as given in (\ref{eq:step1}). 

\textbf{The special case of $\epsall=1$.} Next we turn to a discussion when $\epsall=1$, i.e., the case of Poisson potential outcomes. We show that an MCMC sampler is not needed in this special case since an analytical form for $\p(\ymis=y\,|\,\Yobs,\W,\X,\epsall,\thetaall)$ exists. In the interest of generality, an expression that includes $\epsall$ and $\thetaall$ is presented as it offers insights through the potential outcomes paradigm as to why this formulation only works for $\epsall=1$ and not for general $\epsall$ and $\thetaall$.

Note that the conditional density of $\ymis=y_i$ given $\Yobs,\W$ can be expressed as:
\begin{align*}
\p(\ymis=y\,|\,\Yobs,\W,\X)=\int\int \psi_{\epsall,\thetaall}(y)\p(\epsall,\thetaall\,|\Yobs,\W,\X)   d\epsilon_i \,d\vartheta_i
\end{align*}
where 
\begin{align}
\label{eq:inner-integral}
\psi_{\epsall,\thetaall}(y)=\begin{pmatrix}
\gamma+y-1 \\ y
\end{pmatrix}\left(\frac{1}{1+h}\right)^{\gamma}\left(1-\frac{1}{1+h}\right)^{y}
\end{align}
with $\gamma$ and $h$ given in equation (\ref{eq:mumis-distribution-2}). To see why this is true, first observe that 
\begin{align}
\label{eq:step3-useless}
&\p(\ymis=y\,|\,\Yobs,\W,\X)\nonumber\\
&=\int\int\int {p(Y^{\mathrm{mis}}=y\,|\,\Yobs,\W,\X,\epsall,\mu^{\textrm{mis}},\thetaall)}\cdot \p(\mumis\,|\,\epsall,\thetaall,\Yobs,\W,\X)\cdot\nonumber\\
&\qquad\qquad \p(\epsall,\thetaall\,|\,\Yobs,\W) d\epsall \,d\mumis\,d\thetaall
\end{align}
and define the function $\psi$ as the inner integral with respect to $\mumis$:
\begin{align}
\label{eq:psi}
&\psi_{\epsall,\thetaall}(y)\nonumber\\
&:=\int \p(\ymis=y\,|\,\mumis,\epsall,\thetaall,\Yobs,\W,\X)\cdot p(\mumis\,|\,\epsall,\thetaall,\Yobs,\W,\X)d\mumis.
\end{align}
Then, note that the first and second terms of the integrand for $\psi_{\epsall,\thetaall}(y)$ are Poisson and Gamma, respectively. Hence, $\psi_{\epsall,\thetaall}(y)$ is the (conditional) density of a negative-binomial distribution as a result of the Poisson-Gamma mixture property which yields equation (\ref{eq:inner-integral}). Equations (\ref{eq:step3-useful}) and (\ref{eq:step3-useless}) differ in that the former has $\betaall$ marginalized while the latter has $\mu^{\textrm{mis}}$ integrated out. The latter expression is difficult to sample from due to the construction of our count potential outcomes. To elucidate on this point, observe that $p(\epsall,\thetaall\,|\,\Yobs,\W,\X)$ in equation (\ref{eq:step3-useless}) can be decomposed into the product of two components
\begin{align}
\label{eq:joint-posterior-useless}
    p(\epsall,\thetaall\,|\Yobs,\W,\X)=p(\epsall\,|\Yobs,\W,\X,\thetaall)\cdot p(\thetaall\,|\,\Yobs,\W,\X).
\end{align}
The first and the second terms on the right hand side of equation (\ref{eq:joint-posterior-useless}) can be re-expressed as
\begin{align}
\label{eq:tmp1}
    p(\epsall\,|\Yobs,\W,\X,\thetaall)\propto p(\Yobs\,|\,\W,\X,\thetaall,\epsall)\cdot \p(\epsall\,|\,\thetaall)
\end{align}
as well as 
\begin{align}
\label{eq:tmp2}
    \p(\thetaall\,|\,\Yobs,\W,\X)\propto p(\Yobs\,|\,\W,\X,\thetaall)\cdot\p(\thetaall),
\end{align}
respectively. Since we can determine the distribution of the overdispersion $\p(\epsall\,|\,\thetaall)$ and its hyperparameters $\p(\thetaall)$, this makes it easier to sample from. However, we do not know, \emph{a priori}, the conditional distributions of $p(\Yobs\,|\,\W,\X,\thetaall,\epsall)$ and $p(\Yobs\,|\,\W,\thetaall,\epsall)$ from the first terms of equations (\ref{eq:tmp1}) and (\ref{eq:tmp2}), respectively. But what we \emph{do} have, is the posterior predictive distribution of $\Yobs$, i.e., the conditional distribution of the missing potential outcomes given the $\W,\X,\epsall,\thetaall$, whose explicit form is given in equation (\ref{eq:pyobs-given-everything}). Even though it is evident from the preceding discussion that the decomposition in equation (\ref{eq:step3-useless}) is prohibitive in drawing samples for general quantities of $\betaall$ and $\epsall$, it is particularly useful to evaluate the conditional distribution of $\Ymis$ given $\Yobs,\W$ and the covariates for a Poisson potential outcomes model where an explicit expression is available. Hence, the conditional density of $\ymis=y_i$ given $\Yobs,\W,\X$ can be explicitly expressed as:
\begin{align}
\label{eq:step1-for-poisson}
\p(\ymis=y\,|\,\Yobs,\W,\X)= \begin{pmatrix}
\gamma+y-1 \\ y
\end{pmatrix}\left(\frac{1}{1+h}\right)^{\gamma}\left(1-\frac{1}{1+h}\right)^{y}
\end{align}
whose values of $\gamma$ and $h$ are given in equation (\ref{eq:mumis-distribution-2}).

We complete this section by carrying out Step $(iv)$ to compute the ATE as in (\ref{eq:taufs}). As discussed previously, for general specifications of $\epsall$ and $\thetaall$, the distribution of ATE may not have an analytical form. Here, we explain the sampling methods for evaluating it. The key elements are Steps $(i)$ and $(ii)$. Given a draw of $\thetaall$, $\epsall$ and $\betaall$ (drawn using Procedure 1), we substitute these values into the conditional distribution of $\Ymis$ using Step $(i)$ to impute, independently all the missing potential outcomes, that is, draw $\Ymis$ from a Poisson distribution by substituting the pairs of $\thetaall$, $\epsall$, and $\betaall$ as in Step $(i)$ using equation (\ref{eq:step1}). Substituting the observed and imputed missing potential outcomes into (\ref{eq:atehat}) yields an estimate for ATE based on the $r^{th}$ imputation, $\hat\tau^{(r)}$
\begin{align}
\label{eq:tauhatfs}
    \widehat{\textrm{ATE}}^{(r)}=\frac{1}{N}\sum_{i=1}^N\left((2W_i-1)\yobs+(1-2W_i)\hat Y^{{(r)}}_i\right)
\end{align}
where $\hat Y^{{(r)}}_i$
where we have used the identity in equation (\ref{eq:identity}). To derive the full posterior distribution of ATE, this procedure is repeated $R$ times and the average and variance of the imputed estimators $\widehat{\textrm{ATE}}^{(1)},\widehat{\textrm{ATE}}^{(2)},\hdots,\widehat{\textrm{ATE}}^{(R)}$ are
\begin{align}
\label{eq:estimated-mean-variance-tau}
    \frac{1}{{R}}\sum_{r=1}^{R}\widehat{\textrm{ATE}}^{(r)}=:\overline{\textrm{ATE}},\qquad \frac{1}{{R-1}}\sum_{r=1}^{R} (\widehat{\textrm{ATE}}^{(r)}-\overline{\textrm{ATE}})^2.
\end{align}

\fbox{\begin{minipage}{35em}
\emph{Procedure 2}: Estimating the mean and variance of $\hat{\tau}$.
\\
{Input}: A list $\mathcal{S}$ containing samples of (${\betaall}, \epsall, \thetaall$) from the joint posterior, $N$ training examples of the form $(\Yobs, \mathbf{x}_i, W_i)$ where $\mathbf{x}_i$ are covariates, $\Yobs$ is the observed outcome, and $W_i$ is the treatment assignment

\begin{enumerate}[noitemsep,topsep=0pt,label*=\arabic*.]
\item Initialize an empty list $\mathcal{T}^R$ of length $R$
\item For each $r$-th triplet (${\betaall},\epsall, \thetaall$) in the list $\mathcal{S}$ returned from Procedure~1
\begin{enumerate}[noitemsep,topsep=0pt,label*=\arabic*.]
    \item Initialize an empty list $\mathcal{L}^N_{\Ymis}$ to store $\Ymis$ of length $N$
    \item For each individual $i \in \{1,2,\dots,N\}$
    \begin{enumerate}[noitemsep,topsep=0pt,label*=\arabic*.]
        \item Draw a sample of $Y_i^{\textrm{mis}}\,|\,\Yobs,\W,\X,\betaall,\epsall,\thetaall$ using equation (\ref{eq:step1})
        \item Append the above $Y_i^{\textrm{mis}}$ sample to $\mathcal{L}^N_{\Ymis}$
    \end{enumerate}
    \item Compute $\widehat{\textrm{ATE}}^{(r)}$ using equation (\ref{eq:tauhatfs}) and $\mathcal{L}^N_{\Ymis}$
    \item Append the above $\widehat{\textrm{ATE}}^{(r)}$ to $\mathcal{T}^R$
\end{enumerate}
\item Compute empirical mean and variance of $\widehat{\textrm{ATE}}$ from list $\mathcal{T}^R$ using equation (\ref{eq:estimated-mean-variance-tau})
\item Return the empirical mean and variance of $\widehat{\textrm{ATE}}$.
\end{enumerate}
\end{minipage}}

Procedure 2 summarizes the preceding discussion on sampling methods to compute average treatment effects in our paradigm.

\subsection{Case studies with Poisson and lognormal-Poisson potential outcomes}
\label{sec:three-three}
In the previous section, we developed a Bayesian imputation model for the missing potential outcomes with an approximation architecture to bypass the non conjugacy between the Gaussian priors and the Poisson likelihood. The result is a general strategy for drawing samples for the predictive posterior for $\Ymis$ which is then used to compute the finite population average treatment effect. In this section, we show via two examples, one of which exhibits a closed form expression for the ATE (Poisson potential outcomes), and another, for which the ATE can be found with an efficient simulation method due to tractable marginal posteriors (lognormal-Poisson potential outcomes).

\paragraph{Poisson potential outcomes.} We set $\epsall\equiv 1.$ Since there are no hyperparameters, the resulting conditional distribution of $\ymis$ given $\Yobs,\W$ and its covariates $\X$ is given by equation (\ref{eq:step1-for-poisson}). The exact distribution of the finite population average treatment effect given in terms of its conditional mean and variance are as follows:
\begin{align}
\label{eq:poissonate}
\frac{1}{N}\sum_{i=1}^N\left((2W_i-1)\cdot (\yobs-\gamma_i)\right),\quad \frac{1}{N^2}\sum_{i=1}^N \gamma_i\frac{(1-h_i)}{h_i^2}
\end{align}
where $\gamma_i$ and $h_i$ are given in equation (\ref{eq:mumis-distribution-2}).

\paragraph{Lognormal-Poisson potential outcomes.} In this example, we take $\epsilon^{[c]}\sim\log N(0,(\sigma^{[c]})^2)$ and 
$\epsilon^{[t]}\sim\log N(0,(\sigma^{[t]})^2)$. The hyperparameters for this model are $\thetaall=(\sigma^{[c]},\sigma^{[t]})$ with the following priors
\begin{align*}
    \sigma^{[c]}\sim\mathsf{IG}(\alpha^{[c]},\nu^{[c]})\,\,\text{and}\,\,\sigma^{[t]}\sim\mathsf{IG}(\alpha^{[t]},\nu^{[t]})
\end{align*}
as in equation (\ref{eq:priors_overdispersion}), as described in Section \ref{sec:count-potential-outcomes} (repeated here for convenience).

We now spell out the marginal posteriors in (\ref{eq:marginal_posterior}). It is readily computed that the posterior of $\epsall$ is given by
\begin{equation}
	\begin{aligned}
	\p(\epsilon_i^{[c]} \,|\, \Yobs, \W,\X,\betaall,\thetaall) &= \log\mathsf{N}(\epsilon_i^{[c]}\,|\,m_i^{[c]}, (s_i^{[c]})^2),\\
	(s_i^{[c]})^2&= ((1-W_i)^2 \yobs + (\sigma^{[c]})^{-2})^{-1},\\
	m_i^{[c]} &= -(1-W_i)\yobs (\xitilde\betaall-\log \yobs)(s_i^{[c]})^2
	\end{aligned}
	\end{equation}
	and
	\begin{align}
	\p(\epsilon_i^{[t]} \,|\,\Yobs, \W,\betaall,\thetaall) &= \log\mathsf{N}(\epsilon_i^{[t]}\,|\,m_i^{[t]}, (s^{t})^2),\nonumber\\
	(s_i^{[t]})^2 &= (W_i^2\yobs + (\sigma^{[t]})^{-2})^{-1},\\
	m_i^{[t]} &= -W_i \yobs (\xitilde\betaall-\log \yobs)(s_i^{[t]})^2\nonumber
\end{align}
where $\xitilde\betaall$ is defined in (\ref{eq:xiobs-neat}). Finally, the posterior of the hyperparameters are given by
\begin{equation}
\label{eq:hyper1}
\begin{aligned}
\p\left((\sigma^{[c]})^2 \,|\,\epsilon^{[c]}\right) &= \mathsf{IG}\left((\sigma^{[c]})^2|\tilde\alpha^{[c]}, \tilde\gamma^{[c]}\right),\\
\tilde\alpha^{[c]} = \alpha^{[c]} + \frac{1}{2}N,&\quad
\tilde\gamma^{[c]}= \gamma^{[c]} + \frac{1}{2}\sum_{i=1}^N (\log \epsilon_i^{[c]})^{2}
\end{aligned}
\end{equation}
and
\begin{equation}
\label{eq:hyper2}
\begin{aligned}
\p((\sigma^{[t]})^2|\epsilon^{[t]}) &= \mathsf{IG}((\sigma^{[t]})^2|\tilde\alpha^{[t]}, \tilde\gamma^{[t]}),\\
\tilde\alpha^{[t]} = \alpha^{[t]} + \frac{1}{2}N,&\quad
\tilde\gamma^{[t]}= \gamma^{[t]} + \frac{1}{2}\sum_{i=1}^N (\log \epsilon_i^{[t]})^{2}.
\end{aligned}
\end{equation}
The closed form for the marginal posterior for $\betaall$ is guaranteed by Lemma \ref{lem:law of betaall}. Moreover, the tractability of the marginal posteriors of the overdispersion $\epsall$ and its hyperparamters $\thetaall$ makes it appealing in this setting as the average treatment effect can be computed efficiently using the steps outlined in Procedure 2.


\section{Simulations and real data analysis}
\label{sec:experiments}
We first assess the newly proposed approximation in a controlled setting. Specifically, we illustrate how the sample size, the types of potential outcomes, and the variation of the model complexities affect the error of our approximation methods. We compare the ATE estimated from an exact method with no approximation ($\mathsf{HMC\mbox{-}Exact}$) versus that of using our proposed approximation method ($\mathsf{Our\mbox{-}Approx}$). Working within the potential outcomes model in (\ref{eq:model}), our proposed approximation to estimate the finite population average treatment effect uses Lemma \ref{lem:law of betaall} as well as Procedures 1 and 2. The exact method, however, does not utilize any such approximations. In this case, we draw exact samples using an MCMC scheme for the posterior distributions using \texttt{rstanarm} \citep{GelmanHill:2007,Goodrich:2020}. We consider two types of potential outcomes: potential outcomes that do not (i.e., Poisson, $\epsilon\equiv 1$) and do (i.e., lognormal-Poisson, $\epsilon^{[c]}=\epsilon^{[t]}\sim\log N(0,\sigma^2))$) account for overdispersion.

\subsection{Parameter recovery and computational time}
\label{sec:synthetic}
    \begin{figure}[!ht]
        \centering
        \begin{subfigure}[b]{0.475\textwidth}
            \centering
            \includegraphics[width=\textwidth]{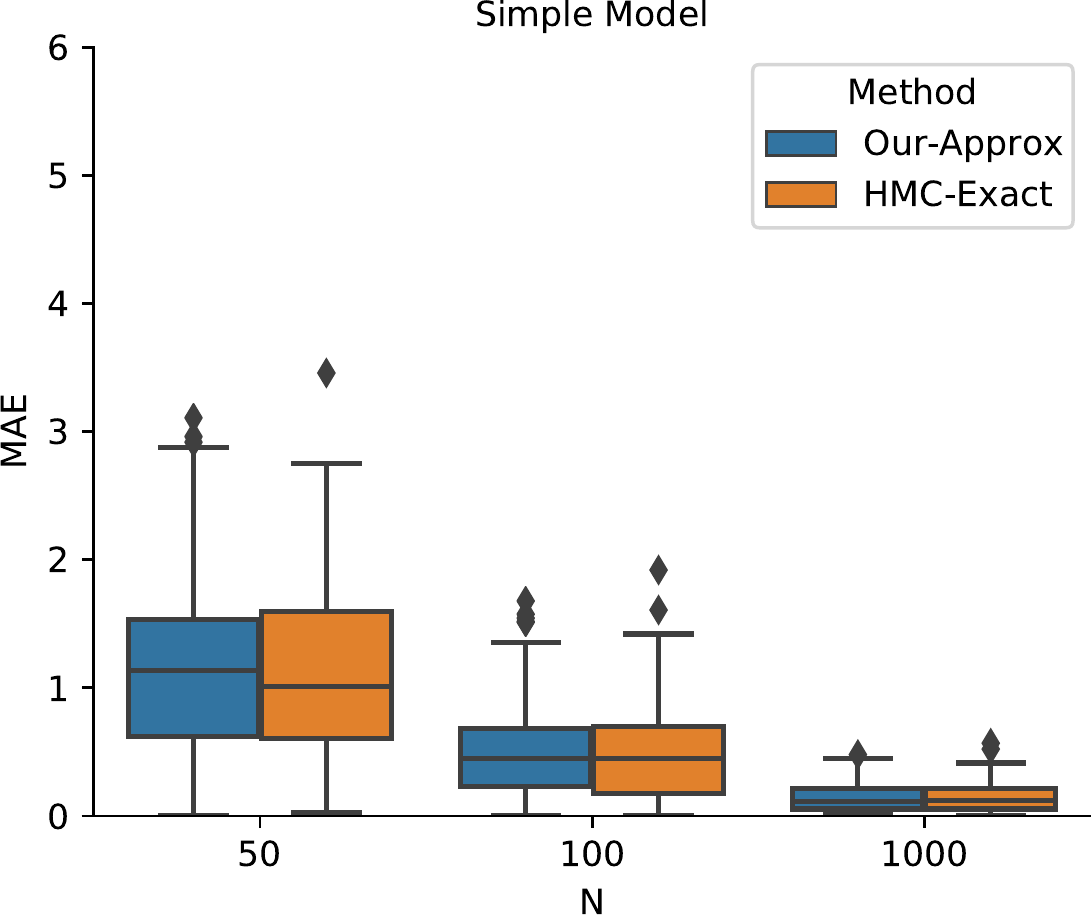}
            \caption[Network2]%
            {{\tiny Poisson potential outcomes: Simple Model}}    
            \label{fig:mean and std of net14}
        \end{subfigure}
        \hfill
        \begin{subfigure}[b]{0.475\textwidth}  
            \centering 
            \includegraphics[width=\textwidth]{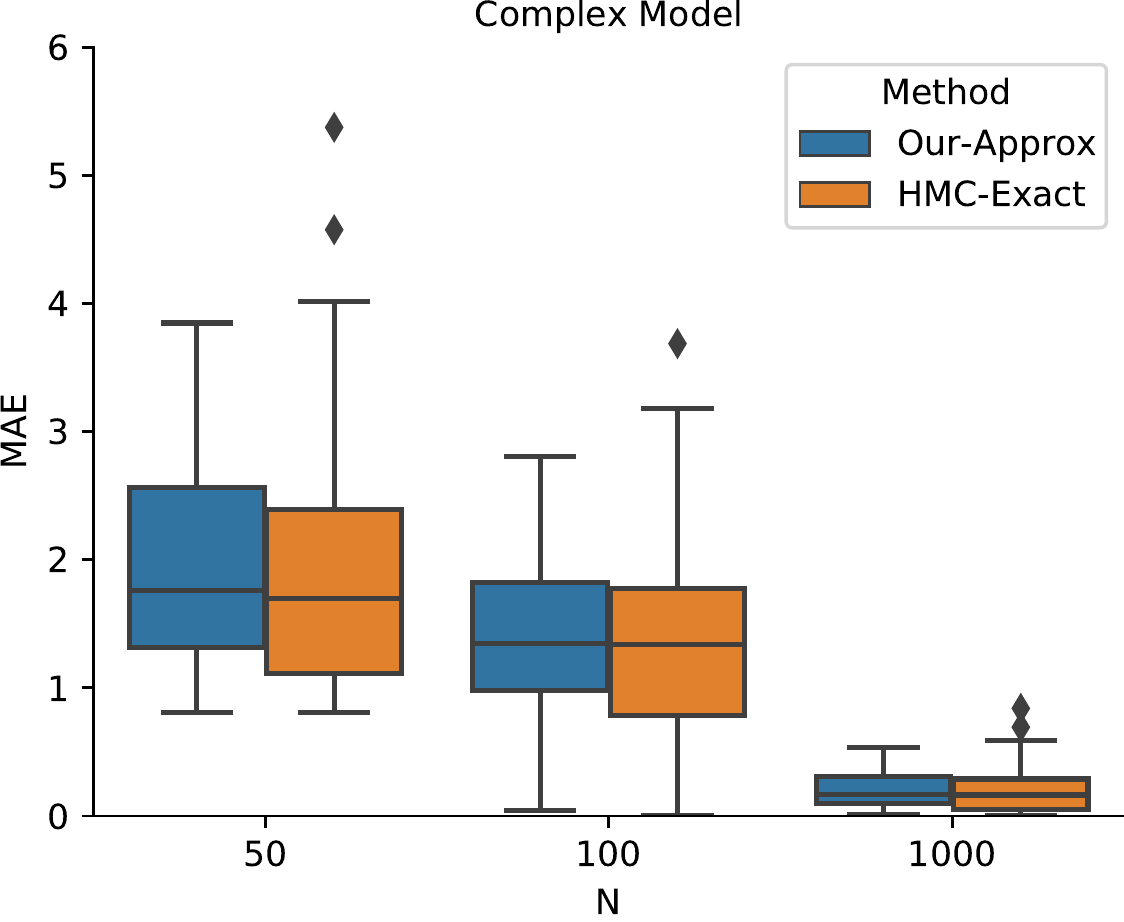}
            \caption[]%
            {{\tiny Poisson potential outcomes: Complex Model}}    
            \label{fig:mean and std of net24}
        \end{subfigure}
        \vskip\baselineskip
        \begin{subfigure}[b]{0.475\textwidth}   
            \centering 
            \includegraphics[width=\textwidth]{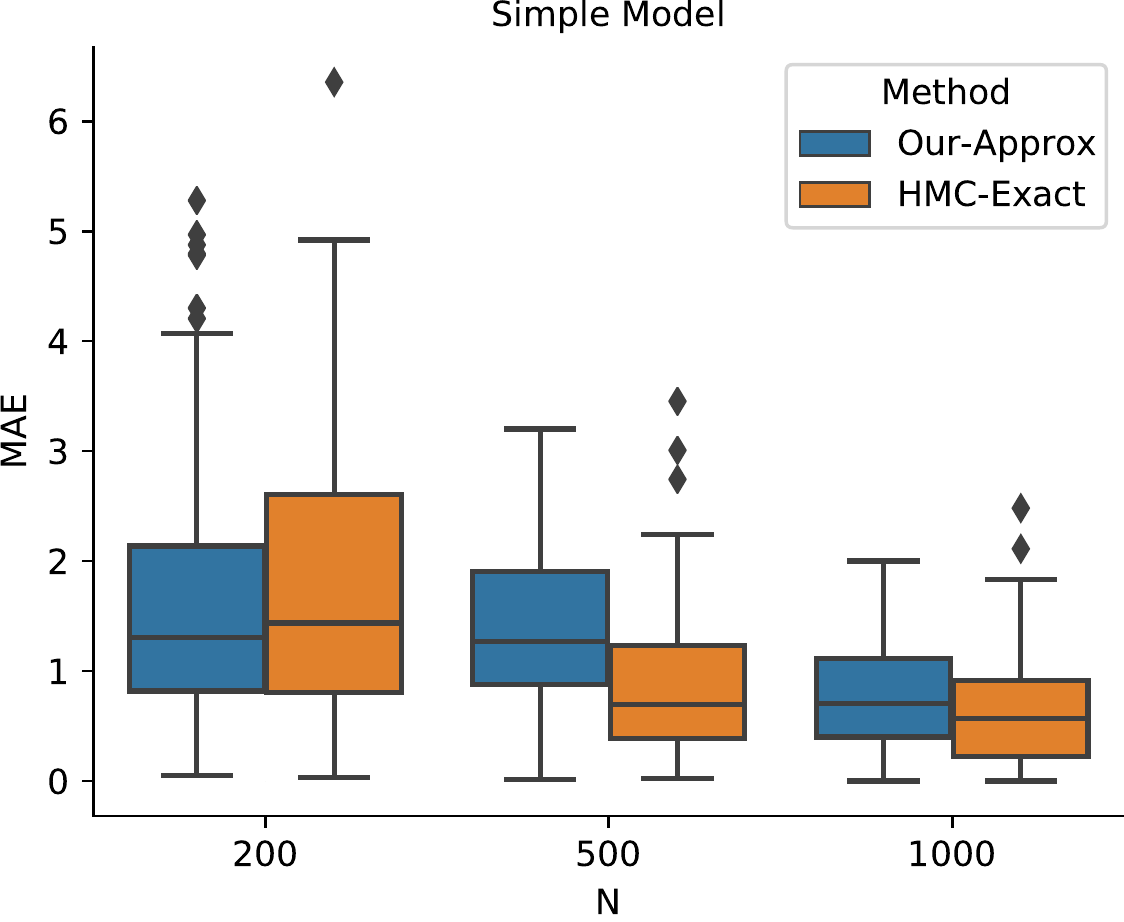}
            \caption[]%
            {{\tiny lognormal-Poisson potential outcomes: Simple Model}}    
            \label{fig:mean and std of net34}
        \end{subfigure}
        \hfill
        \begin{subfigure}[b]{0.475\textwidth}   
            \centering 
            \includegraphics[width=\textwidth]{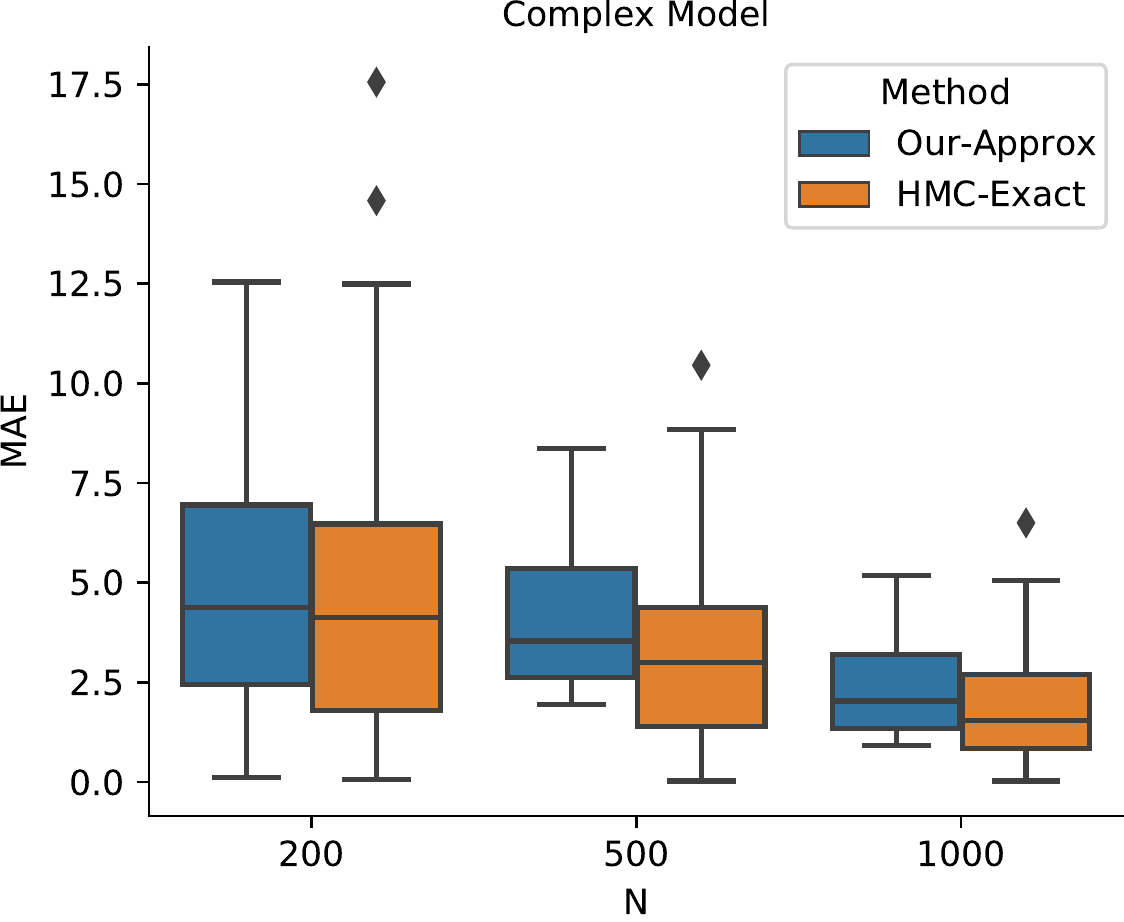}
            \caption[]%
            {{\tiny lognormal-Poisson potential outcomes: Complex Model}}    
            \label{fig:mean and std of net44}
        \end{subfigure}
        \caption[ The average and standard deviation of critical parameters ]
        {\small MAE of ATE estimation with Poisson (overdispersion parameter $\epsall=1)$) and lognormal-Poisson potential outcomes (overdispersion parameter $\epsall=\log N(0,\sigma^2)$) for the Simple and Complex Models in equation (\ref{eq:simple-complex-model}). A comparison between our approximation ($\mathsf{Our\mbox{-}Approx}$) and exact inference ($\mathsf{HMC\mbox{-}Exact}$) is presented.} 
        \label{fig:combined-boxplots}
    \end{figure}
\paragraph{\textbf{Setup.}}
We describe our simulation study in terms of the number of units $N$, where for each $N$, we compute the mean absolute error (MAE) of the ATE defined in (\ref{eq:taufs}) for both Poisson and the lognormal-Poisson potential outcomes. Precisely, we evaluate the MAE, defined as $\|{\textrm{ATE}}_{\mathsf{Our\mbox{-}Approx}}-{\textrm{ATE}}_{\ast}\|_1$ for each $N$ where $\textrm{ATE}_{\ast}$ denotes the true ATE. We further set the number of treated and control units as $N_t=N_c=\nicefrac{1}{2}\cdot N$.

In evaluating the performance of our approximation on the ATE, we generate $N$ samples of the potential outcomes model with known $N_t=N_c$ and true parameters  $\betaall_{\ast}=(\beta_{\ast}^{[t]},\beta_{\ast}^{[c]})^{\top}$ for each $\epsilon$. From these samples, we compute a parameter estimate $\hat\betaall_{\mathsf{Our\mbox{-}Approx}}$ using Procedure 1. To aid comparison, we further draw samples using an exact scheme for the posterior of $\betaall$ using \texttt{rstanarm} to obtain the distributions of $\hat\betaall_{\mathsf{HMC\mbox{-}Exact}}$. A similar set of estimates is obtained for the hyperparameters $\thetaall_{\ast}=(\vartheta^{[t]}_{\ast},\vartheta^{[c]}_{\ast})$. Given these inferred parameters $\hat\betaall_{\mathsf{HMC\mbox{-}Exact}}$ and $\hat\betaall_{\mathsf{Our\mbox{-}Approx}}$ as well as $\hat\thetaall_{\mathsf{HMC\mbox{-}Exact}}$ and $\hat\thetaall_{\mathsf{Our\mbox{-}Approx}}$, we compute the corresponding ATE for the two cases of potential outcomes. 

We consider two simulation models, which we term "Simple Model" and "Complex Model", and whose true parameters $\betaall_{\ast}$ are varied for each $\epsilon$ under consideration:
\begin{equation}
\label{eq:simple-complex-model}
\begin{aligned}
    \text{Simple~Model}\!&= \!\!
\begin{cases}
    \mu_i^{[c]}\!\!=\!\exp(3.2\!+\!0.3x_1)\epsilon^{[c]}_i\\
    \mu_i^{[t]}\!\!=\!\exp(3.7\!+\!0.8x_1)\epsilon^{[t]}_i,
\end{cases}\\
\text{Complex~Model}\!&= \!\!
\begin{cases}
    \mu_i^{[c]}\!\!=\!\exp(3.2\!+\!0.3x_{i1}\!+\!0.7x_{i2}\!+\!1.0x_{i3}\!+\!0.4x_{i4}\!+\!0.8x_{i5})\epsilon^{[c]}_i\\
    \mu_i^{[t]}\!\!=\!\exp(3.7\!+
    \!0.8x_{i1}\!+\!0.5x_{i2}\!+\!1.2x_{i3}\!+\!0.6x_{i4}\!+\!0.9x_{i5})\epsilon^{[t]}.
\end{cases}
\end{aligned}
\end{equation}

The quantities $x_{id}$ are drawn independently from a uniform distribution on the interval $[-1,1]$ for individual $i\in\{1,\hdots,N\}$, and of dimension $d\in\{1,\hdots,5\}$. With the Poisson potential outcomes, the ATE has a closed form solution (\ref{eq:poissonate}). For lognormal-Poisson, we use the closed form expressions for the posteriors for hyperparmeters in equations (\ref{eq:hyper1})--(\ref{eq:hyper2}) and repeat this $R=1000$ times as explicated in Procedures 1 and 2 to obtain the ATE.

In addition, we offer some insight into the behavior and frequency of the occurrence of the count potential outcomes had we used a continuous potential outcomes framework to model its count counterpart. To that end, using the simulated count potential outcomes of the Complex Model, we first use the identity $\yobs = (1-W_i)\cdot\yzero+W_i\cdot \yone$ and then use Bayesian Additive Regression Trees \citep{hill2011bayesian} to fit $\yobs$ against the covariates and $W_i$ to infer the parameters. With these, the potential outcomes $Y(0)$ and $Y(1)$ are simulated using the Bayesian Additive Regression Trees (confer the functions $f(W=0,x)$ and $f(W=1,x)$ in Section 3 of \cite{hill2011bayesian}).

\paragraph{Results.}

Figure~\ref{fig:combined-boxplots} shows boxplots for the variation of the estimation error (MAE) of our proposed approximation vs exact methods. We see that the estimates using our approximation have commensurate accuracy to the exact inference for various $N$: when our approximation is well specified, the MAEs obtained from the approximation have little difference from that of the exact methods, and indeed recovered the optimal ATEs asymptotically.

One incontrovertible advantage of using our proposed approximation is its speed – it is seen to be orders of magnitude faster compared to drawing inferences using exact methods. Tables \ref{tab:compute-time-poisson} and \ref{tab:compute-time-lognormal-poisson} provide the compute time to estimate the ATE for the  Complex Model. We see that our method requires less computational time (more than 100-fold) than using \texttt{rstanarm} \citep{GelmanHill:2007,Goodrich:2020} to approximate the same posterior distribution and subsequently compute the ATE.

\begin{table}
\centering
\scalebox{0.8}{
\begin{tabular}{@{}lll@{}}
\toprule
$N$     \quad    & \quad HMC-Exact                    & \quad  Our-Approx \\ \midrule
100,000 \quad  & \quad \textcolor{white}{00}17 mins                   & \quad \textcolor{white}{0}0.1 mins              \\
200,000   \quad & \quad \textcolor{white}{00}33 mins                    & \quad \textcolor{white}{0}0.2 mins             \\
500,000   \quad & \quad \textcolor{white}{00}81 mins ($\sim$1.3 hours)  & \quad \textcolor{white}{0}0.5 mins              \\
1,000,000  \quad & \quad \textcolor{white}{0}162 mins ($\sim$2.7 hours) &  \quad \textcolor{white}{0}1.0 mins              \\
2,000,000 \quad & \quad \textcolor{white}{0}333 mins ($\sim$5.5 hours) & \quad \textcolor{white}{0}2.0 mins              \\
5,000,000 \quad & \quad $>$ $24$ hours             & \quad \textcolor{white}{0}5.0 mins             \\
10,000,000 \quad & \quad $>$ $24$ hours             & \quad 10.0 mins             \\ \bottomrule
\end{tabular}}
\caption{Computational time in minutes (mins) to estimate the ATE for Complex Model with Poisson potential outcomes, see equation (\ref{eq:simple-complex-model}) with $\boldsymbol\epsilon=1$.}
\label{tab:compute-time-poisson}
\end{table}

\begin{table}
\centering
\scalebox{0.8}{
\begin{tabular}{@{}lll@{}}
\toprule
$N$ \quad       &  \quad HMC-Exact                 &  \quad Our-Approx \\ \midrule
10,000   \quad  & \quad  \textcolor{white}{00}75 mins ($\sim$1.3 hours)                    & \quad  \textcolor{white}{0}0.3 mins          \\
20,000    & \quad  \textcolor{white}{0}100 mins ($\sim$1.6 hours)                    & \quad   \textcolor{white}{0}0.5 mins              \\
50,000  \quad   & \quad  \textcolor{white}{0}260 mins ($\sim$4.3 hours)  & \quad  \textcolor{white}{0}1.3 mins             \\
80,000  \quad   &  \quad  \textcolor{white}{0}162 mins ($\sim$2.7 hours) 
& \quad  \textcolor{white}{0}2.0 mins              \\
100,000 \quad  & \quad  \textcolor{white}{0}450 mins ($\sim$7.5 hours) & \quad  \textcolor{white}{0}2.5 mins              \\
500,000 \quad   & \quad  $>$ $24$ hours             & \quad  11.0 mins       \\ 
1,000,000 \quad   & \quad  $>$ $24$ hours             & \quad  20.0 mins       \\ \bottomrule
\end{tabular}}
\caption{Computational time in minutes (mins) to estimate the ATE for Complex Model with lognormal-Poisson potential outcomes, see equation (\ref{eq:simple-complex-model}) with overdispersion parameter $\epsall=\log N(0,0.5^2)$.}
\label{tab:compute-time-lognormal-poisson}
\end{table}

\begin{figure}
\centering
\begin{subfigure}[b!]{0.96\textwidth}
   \includegraphics[width=1\linewidth]{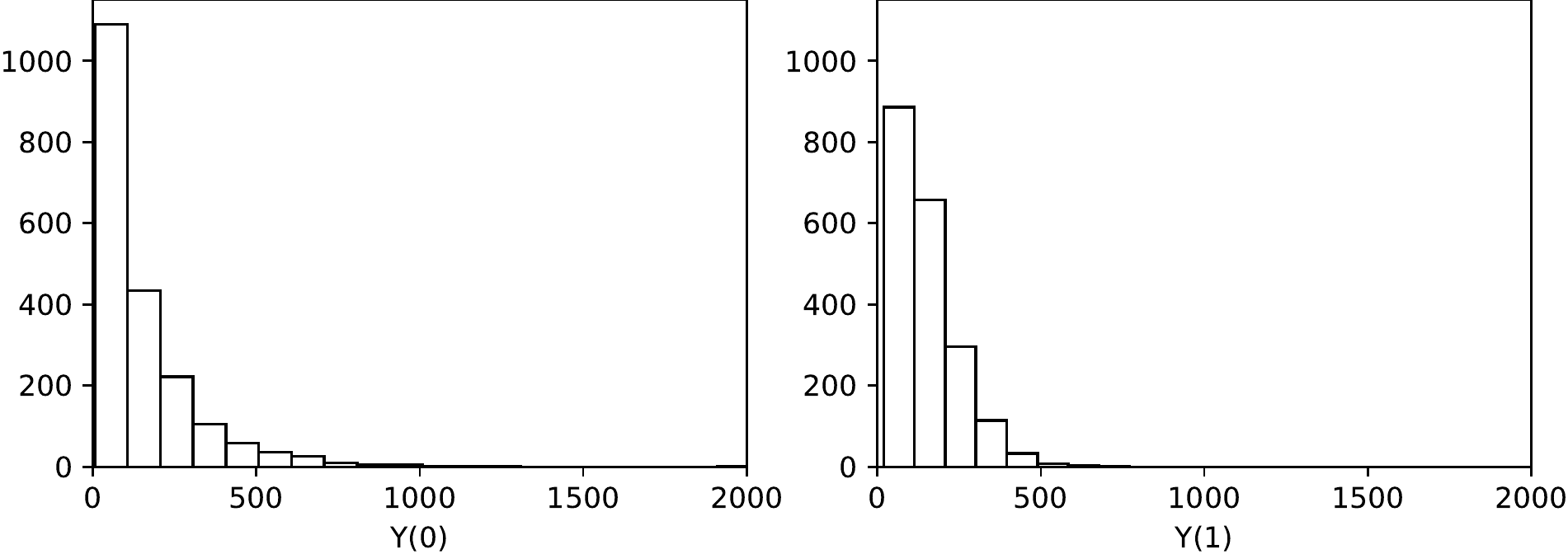}
   \caption{Histogram of potential outcomes drawn from the Complex Model in equation (\ref{eq:simple-complex-model}) with lognormal-Poisson potential outcomes (overdispersion parameter $\epsall=\log N(0,\sigma^2)$)}
   \label{fig:Ng1} 
\end{subfigure}
$\newline\newline$
\begin{subfigure}[b!]{0.94\textwidth}
   \includegraphics[width=1\linewidth]{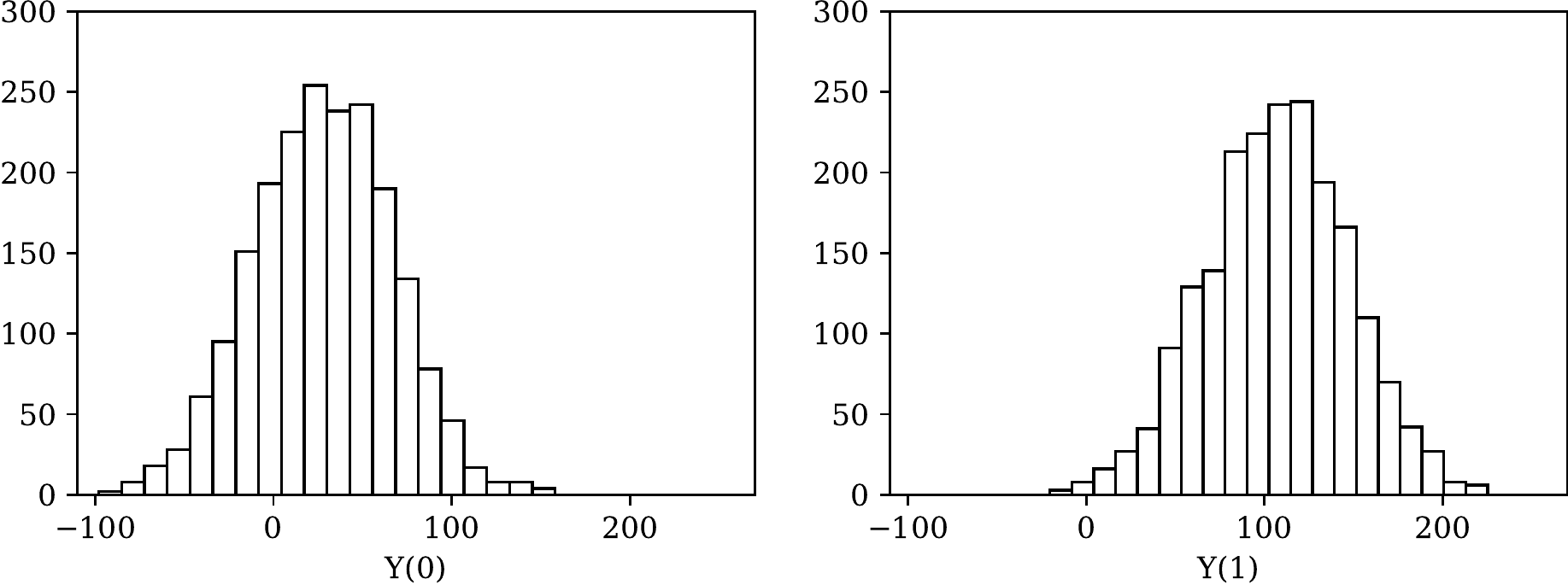}
   \caption{Histogram of potential outcomes drawn from a continuous potential outcomes framework (BART) when the true potential outcomes are counts (top panel)}
   \label{fig:Ng2}
\end{subfigure}

\caption[]{Histogram of the potential outcomes inferred with a continuous potential outcomes using BART \citep{hill2011bayesian} when the true $(Y(0),Y(1))$ is of count type, drawn from Complex Model in (\ref{eq:simple-complex-model}) with lognormal-Poisson potential outcomes (overdispersion parameter $\epsall=\log N(0,\sigma^2)$). Note that some of the realizations go negative due to the normality assumption of the continuous potential outcomes framework (lower panel).}
\label{fig:count-gaussian-infer}
\end{figure}

Finally, Figure~\ref{fig:count-gaussian-infer} offers some illustrative diagnostics that a naive evaluation of average treatment effect using a continuous potential outcome model when they are in fact of count type may lead to nonsensical answers. As can be seen from lower panel of Figure~\ref{fig:count-gaussian-infer}, some potential outcomes are negative due to the incompatibility of using a continuous real-valued potential outcomes framework to model counts. One possible avenue is to truncate the negative potential outcomes to zero, but it is unclear how this rounding should be handled in a principled manner and whether it influences the posterior predictive distribution when the true potential outcomes are of count type.

\subsection{LaLonde job training data}
The data that we use to illustrate the approximation methods developed comes from the well-known randomized evaluation program where the authors examined the effectiveness of a job training program (the treatment) on individual earnings (the outcome) in 1978. The National Supported Work (NSW) program was first analyzed by \cite{lalonde}. The sample data set used here is that of the Dehejia-Wahha sample \citep{DehejiaWahba:1999}, which consists of 445 observations, corresponding to 185 treated and 260 control subjects. For each person, we have information on their characteristics, namely age, years of education, whether they were currently or ever before married, whether they possess high school diploma, and ethnicity. We match the data, using functions in the \texttt{matchit} package \citep{matchit}, and after matching there are $370$ observations. 

\begin{table}[!ht]
\centering
\setlength{\tabcolsep}{11pt} 
\vspace{3mm}
\begin{tabular}{@{}cccccc@{}}
\toprule
\vspace{3mm}
Unit & \multicolumn{2}{c}{Potential Outcomes} & Treatment            & \begin{tabular}[c]{@{}c@{}} Observed\\ Outcome\\ (actual)\end{tabular} & \begin{tabular}[c]{@{}c@{}}Observed \\ Outcome\\(integer valued)\end{tabular} \\
\vspace{5mm}
& $Y_i(0)$ & $Y_i(1)$ & $W_i$ & $Y^{\text{obs,continuous}}_i$ & $Y^{\text{obs,categorized}}_i$\\ 
\cmidrule(lr){2-3}\cmidrule(lr){1-1}\cmidrule(lr){4-4}\cmidrule(lr){5-5}\cmidrule(lr){6-6}\\
\vspace{3mm}
1 & 0    & \textbf{?}    & 0 & 0    & 1 \\
\vspace{3mm}
2 & \textbf{?}    & 9.9  & 1 & 9.9  & 2 \\ 
\vspace{3mm}
3 & 12.4 & \textbf{?}    & 0 & 12.4 & 3 \\ 
\vspace{3mm}
4 & \textbf{?}    & 3.6   & 1 & 3.6   & 1 \\ 
\vspace{3mm}
\vdots & \vdots & \vdots & \vdots & \vdots & \vdots\\
\multicolumn{1}{l}{} & \multicolumn{1}{l}{} & \multicolumn{1}{l}{} & \multicolumn{1}{l}{} & \multicolumn{1}{l}{}                                                   & \multicolumn{1}{l}{}\\
\bottomrule
\end{tabular}
\vspace{3mm}
\caption{A subset of the data from the NSW evaluation. Note that the last column is obtained by mapping a value of 1 to earnings (real earnings in 1978) between \$0-\$5000, 2 for \$5000-\$10000, etc. The question marks represent missing potential outcomes.}
\label{tab:lalonde}
\end{table}


Drawing causal inference typically involves sensitive information that needs to be kept private (e.g., salary). Suppose that it may be preferable to assign integer values in place of exact earnings. To demonstrate the suitability of our method, we assign the integer values to the duplicated wage data. The outcome of interest is post-program labour earnings in 1978 (\texttt{re78}). We assign a value of $1$ to earnings between \$0 -- \$5,000, 2 for \$5,000 -- \$10,000, $3$ for \$10,000 -- \$15,000, etc. The maximum integer valued from this data set is 13, which corresponds to unit NSW132 with a post-program labour earnings of \$60,308. The column $Y^{\mathrm{obs,categorized}}_i$ in Table~\ref{tab:lalonde} illustrates an excerpt of the mapped integer values to actual wage earnings. The final column presents the categorized observed outcome which we used in our analysis. The integer valued data are chosen such that its distributional properties roughly depict that of a Poisson random variable, i.e. when $\epsilon\equiv 1$ in equation (\ref{eq:model}) for both the treated and control groups. The empirical mean and variance are approximately equal to $1.89$ and $1.98$ for the control group and $1.62$ and $0.97$ for the treated group, respectively.


\begin{table}[t!]
\centering
\setlength{\tabcolsep}{15pt}
\begin{tabular}{ccccc }
	\toprule
	Group $(g)$ & Salary ('000) & ATE$_{g}$ & \#Units$_{g}$ & Weights$_{g}$\\
	\cmidrule(lr){1-1}\cmidrule(lr){2-2}\cmidrule(lr){3-3}\cmidrule(lr){4-4}\cmidrule(lr){5-5}
	1 & $[0\textcolor{white}{0};5\textcolor{white}{0}]$ & $\textcolor{white}{0}0.38$ & $263$  & $0.591$ \\
	2 & $[5\textcolor{white}{0};10]$ & $\textcolor{white}{+}0.18$ & $99$ & $0.222$  \\
	3 & $[10;15]$ & $-0.07$ & $52$ & $0.117$   \\
	4 & $[15;20]$ & $\textcolor{white}{+}0.23$ & $18$ & $0.040$   \\
	5 & $[20;25]$ & $\textcolor{white}{+}0.54$ & $7$ & $0.016$   \\
	6 & $[25;30]$ & $\textcolor{white}{+}4.24$ & $2$ & $0.004$       \\
	7 & $[30;35]$ & $\textcolor{white}{+}5.23$ & $1$ & $0.002$    \\
	8 & $[35;40]$ & $-0.20$ & $2$ & $0.004$    \\
	9 & $[40;45]$ & ----- & ----- & -----         \\
	10 & $[45;50]$ & ----- & ----- & -----         \\
	11 & $[60;65]$ & $\textcolor{white}{+}9.38$ & $1$ & $0.002$        \\
	\bottomrule
\end{tabular}
\caption{Group-wise average treatment effects for data from the NSW evaluation using real wages of 1978. Note that there are no units in Groups $9$ and $10$.}
\label{tab:groupwise-1978} 
\end{table}

Using Procedure 2 with $\sigma_{\beta}=1000$, the posterior mean of the average treatment effect is $0.41$, and the posterior standard deviation equal to $0.08$. This average treatment effect translates to approximately a salary increase of $\$2050 \,\,(i.e., 0.41\times 5000)$. As a comparison, we compute the average treatment effect on this data set using the model-based inference that assumes a bivariate normal distribution for the potential outcomes given covariates (confer e.g. Chapter 8.7 in \cite{ImbensGuidoRubin:2015}). We report that the Bayesian average treatment effect for the salary increase is approximately \$1962. Although the modeling assumptions are different, this indicates that our framework behaves reasonably and has corresponding accuracy to the standard classical causal model whose potential outcomes are modeled by a bivariate normal distribution. It is important to note that even though the average treatment effect is of a similar magnitude obtained under different assumptions, the posterior distribution of $\Ymis$ estimated from the bivariate normal distribution potential outcomes model can in fact take negative values, as explicated in Section \ref{sec:synthetic}. With regards to the stability of the average treatment effects for different groups relative to the posterior mean value of $0.41$, we note that there exists heterogeneity across different groups with different average treatment effects. In Table~\ref{tab:groupwise-1978}, the columns \#Units$_{g}$ and Weights$_{g}$ represent the number of participants in each group as well as their corresponding weights, calculated as a ratio of \#Units$_{g}$ to the total number of participants, respectively. 




\begin{table}[t!]
\setlength{\tabcolsep}{7pt}
\centering
\begin{tabular}{ ccccc }
\toprule
Group $(g)$ & Salary ('000) & Estimated ATE$_{g}$ & Number of units$_{g}$ & Weights$_{g}$\\
\cmidrule(lr){1-1}\cmidrule(lr){2-2}\cmidrule(lr){3-3}\cmidrule(lr){4-4}\cmidrule(lr){5-5}
1 & $< -20.0$ & $\textcolor{white}{+}6.96$ & $6$ & $0.013$ \\
2 & [$-20.0 ; -17.5$] & $\textcolor{white}{+}9.88$ & $2$ & $0.004$ \\
3 & $[-17.5 ; -15.0]$ & $\textcolor{white}{+}8.41$ & $2$ & $0.004$ \\
4 & $[-15.0 ; -12.5]$ & $\textcolor{white}{+}2.63$ & $3$ & $0.007$ \\
5 & $[-12.5 ; -10.0]$ & $-5.95$ & $5$ & $0.011$ \\
6 & $[-10.0;\,\, -7.5]$ & $-1.28$ & $8$ & $0.018$ \\
7 & $[-7.5\,; -5.0]$ & $\textcolor{white}{+}2.37$ & $12$ & $0.027$ \\
8 & $[-5.0\,; -2.5]$ & $-0.29$ & $13$ & $0.029$ \\
9 & $[-2.5\,\, ; \textcolor{white}{+}0.0]$ & $\textcolor{white}{+}0.52$ & $17$ & $0.038$ \\
10 & $\textcolor{white}{+}[0.0\,\,; \textcolor{white}{+}2.5]$ & $\textcolor{white}{+}0.56$ & $170$ & $0.38$ \\
11 & $\textcolor{white}{+}[2.5\,\, ;\,\,\,\,5.0]$ & $\textcolor{white}{+}0.23$ & $57$ & $0.128$ \\
12 & $\textcolor{white}{+}[5.0\,\, ;\,\,\,\,7.5]$ & $-0.05$ & $45$ & $0.101$ \\
13 & $\textcolor{white}{+}[7.5\,\,\,\,;\,\,\,10.0]$ & $\textcolor{white}{+}0.31$ & $41$ & $0.092$ \\
14 & $\textcolor{white}{+}[10.0 \,\,;\,\,\,\,12.5]$ & $-0.06$ & $26$ & $0.058$ \\
15 & $\textcolor{white}{+}[12.5 \,\,;\,\,\,\,15.0]$ & $-0.40$ & $14$ & $0.031$ \\
16 & $\textcolor{white}{+}[15.0 \,\,;\,\,\,\,17.5]$ & $-0.32$ & $7$ & $0.016$ \\
17 & $\textcolor{white}{+}[17.5 \,\,;\,\,\,\,20.0]$ & $\textcolor{white}{+}4.09$ & $6$ & $0.013$ \\
18 & $> 20.0$ & $\textcolor{white}{+}3.16$ & $11$ & $0.025$ \\
\bottomrule
\end{tabular}
\caption{Estimated Group-wise average treatment effects for data from the NSW evaluation using the difference between the two wages in the years 1974 and 1978.}
\label{tab:groupwise-increase}
\end{table}


We perform another analysis that measures the different levels of incremental earnings between 1974 and 1978, in order to compare and contrast with our above mentioned observations of the average treatment effects in a single year of 1978. In this complementary study, we assign a value of $1$ to an incremental earnings difference of lesser than -\$20,001, $2$ for incremental earnings difference between [-\$20,000 ; -\$17,500], $3$ for incremental earning difference between [-\$17,500 ; -\$15,000], $4$ for [-\$15,000 ; -\$12,500], until we reach the value of $18$ for positive incremental earning difference greater than +\$20,000. The posterior mean of the average treatment effect is $0.68$, and the posterior interval equal to $0.18$. This average treatment effect translates to approximately a salary increase of $\$1700 \,\,(0.68\times 2500)$. The group wise ATE is summarized in Table~\ref{tab:groupwise-increase}.


\subsection{COVID-19}
The coronavirus disease outbreak first identified in China at the end of 2019 has become a pandemic in 2020. This virus has now spread around the world and in the United States alone, it has caused innumerable hospitalizations and daily deaths \citep{Reuters1:2020}. The current pandemic presents a greater risk for counties with higher numbers of confirmed cases, as they generally report more deaths. However, the number of deaths can be misleading as it does not account for the differences in population. Do counties with more confirmed cases also suffer from higher death rates? In this section, we initiate our proposed framework to validate that the confirmed cases do exhibit a relationship to fatality rates across different counties in the United States. 

The data we used for our analysis were obtained from the repository associated to a recent work by  \cite{Dominici:2020}. Using the COVID-19 counts of death and confirmed cases for each county up until 4 April 2020, we limit our data set to counties with confirmed cases above a certain threshold so as to reduce the uncertainties caused by smaller populations. The county population is obtained from the census.gov (https://www.census.gov/data/datasets). 

Define $c_i$ and $p_i$ to be the number of confirmed cases and population size for county $i$, respectively. As mentioned earlier, we focus on those counties that have expected number of COVID-cases cases per 100,000 that exceeds $1$. Furthermore, the county-wise crude death per capita  $\texttt{cdr}_i:=10^q \times d_i/p_i$ is computed,
where $d_i$ denotes the death counts for county $i$. Similarly, the crude number of confirmed cases per capita in county $i$ is computed via the following expression $\texttt{ccr}_i:=10^q \times c_i/p_i$. Throughout our analysis, we take $q=5$. Define the treatment indicator $W_i$ to take a value equals to $1$ if $\texttt{ccr}_i \geq h$ (large number expected confirmed cases per 100,000) and 0 otherwise (low magnitude). Here, we set $h\equiv 70$ and we remark that this quantity can be varied accordingly. With this, we can associate to each county $i$ two potential outcomes, namely $Y_i(1)$, the crude death rate in county $i$ had the \emph{exposure} in county $i$ been larger than $h$, and $Y_i(0)$ otherwise. The observed crude mortality rate in county $i$ is denoted by $Y^{\mathrm{obs}}_i$. We refer to the counties with $W_i=1$ as `infected counties' and to the counties with $W_i=0$ as `less-infected counties' during the study period. The sample size under consideration is $399$ with $181$ treated and $218$ control counties.

We consider the following covariates:
poverty ($\texttt{poverty}$) and education levels ($\texttt{education}$),
percentage of owner-occupied housing ($\texttt{pct\_owner\_occ}$), percentage of Hispanic people ($\texttt{hispanic}$), percentage Asian people ($\texttt{pct\_asian}$), percentage of native population ($\texttt{pct\_native}$), percentage of white people ($\texttt{pct\_white}$) and percentage of black population ($\texttt{pct\_blk}$). Details on these variables can be found in \cite{Dominici:2020}.

\begin{figure}[t]
\centering
  \begin{subfigure}[b]{0.48\textwidth}
    \includegraphics[width=\textwidth]{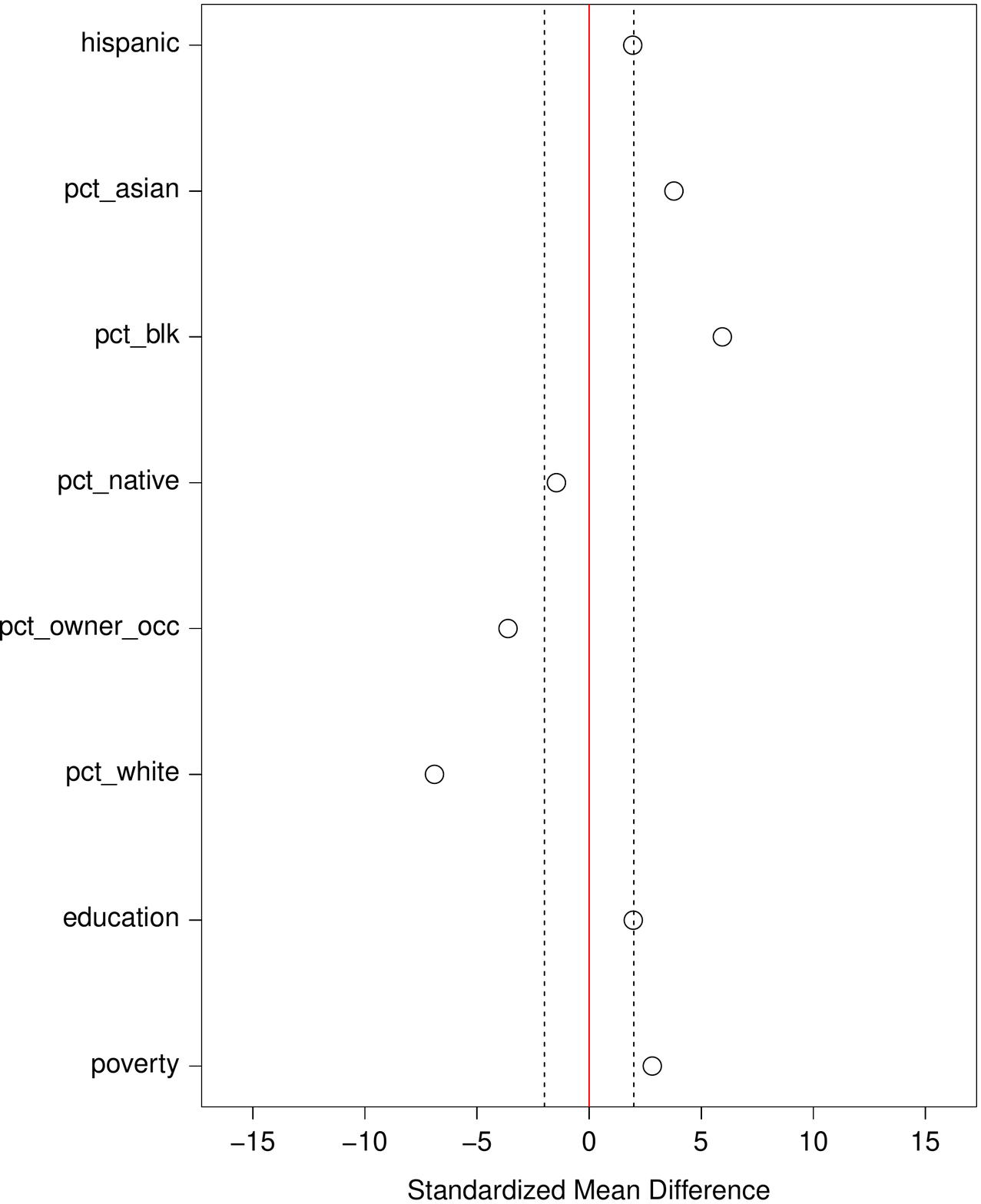}
    \caption{Before matching}
    \label{fig:1}
  \end{subfigure}
  \begin{subfigure}[b]{0.48\textwidth}
    \includegraphics[width=\textwidth]{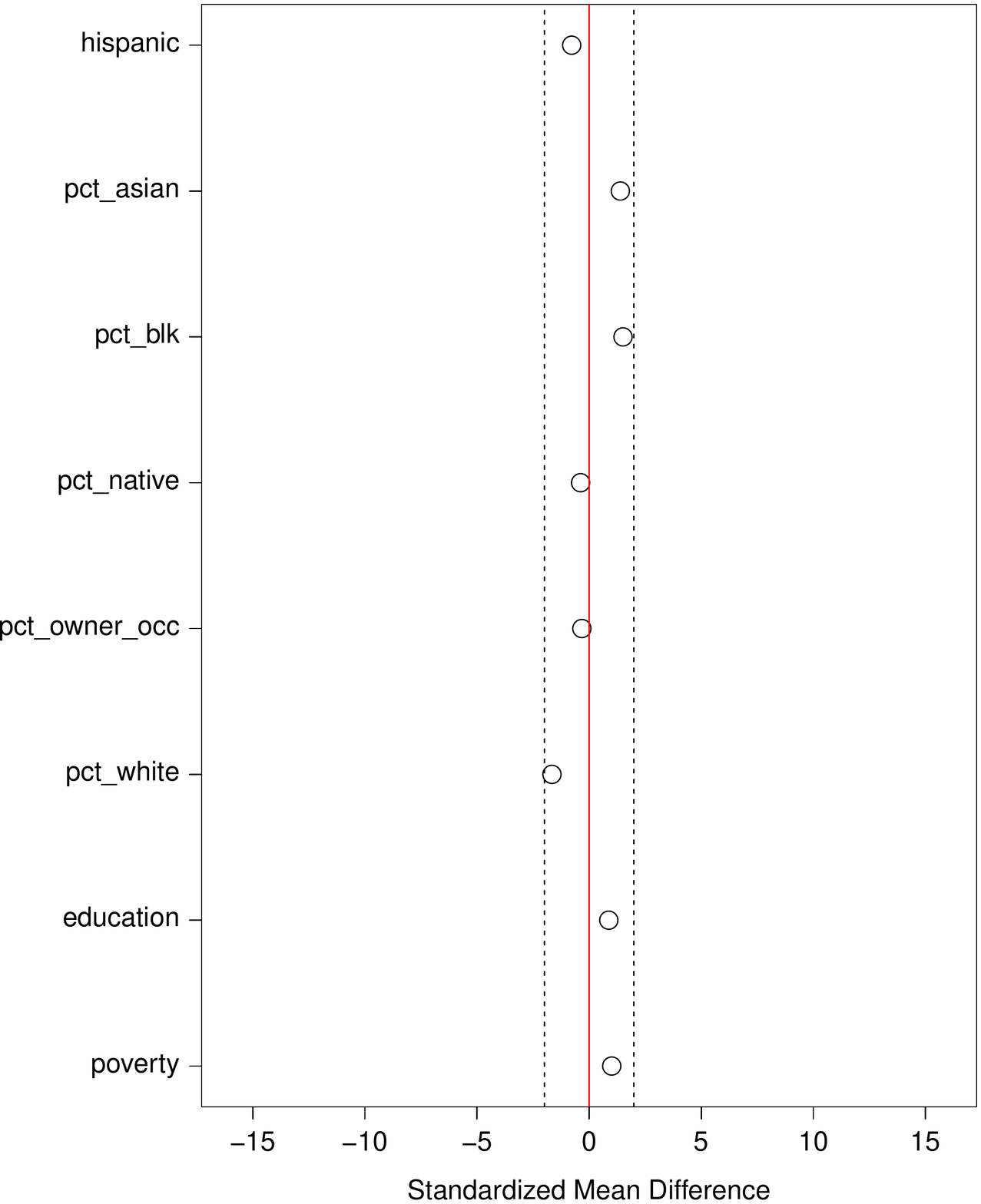}
    \caption{After matching}
    \label{fig:2}
  \end{subfigure}
  \caption{Covariate balance before and after matching.}
  \label{fig:before-after-matching}
\end{figure}

We perform a one-to-one matching strategy by imposing a tolerance level on the maximum estimated propensity score distance (caliper) to create groups of exposed and control days with similar distributions for covariates \citep{RR:1983}. The Love plots (e.g., \cite{Ahmed:2007}) are shown in Figure~\ref{fig:before-after-matching}, where they schematically display covariate balance before and after matching. We see from Figure~\ref{fig:before-after-matching}(a) that the covariates are imbalanced, especially $\texttt{pct\_white}$ and $\texttt{pct\_blk}$. The difference in means of covariates is standardized using the expression ${{(\theta_1-\theta_0)}/\sqrt{{v_0/n_0+v_1/n_1}}}$ where $\theta_0$ $(v_0)$ and $\theta_1$ $(v_1)$ denote the average (variance) of the corresponding control and treated covariates with $n_0$ and $n_1$ being the number of control and treated counties, respectively. 



With the parameter settings of $\sigma_{\beta}=100$ and $\epsall=1$, the posterior mean of the average treatment effect is $172$ deaths from COVID-19 per 100,000 and its posterior standard deviation is $\sim 1.0$. These results suggest that on average, more deaths occur in counties when there are more cases ($\texttt{ccr}>h\equiv 70$). Our choice of $\sigma_{\beta}=100$ is intended to establish little or no
structure through our model assumptions. We further note that $h$ can be varied to analyze different scenarios given the data, which at the present moment may still be limited and very incomplete.


\section{Postlude}
\label{sec:postlude}
In this paper, we have proposed a framework for estimating causal inference in a Bayesian setting when the potential outcomes are counts. Under this paradigm, standard causal models that handle continuous or binary potential outcomes (e.g., \cite{Rubin2006,hill2011bayesian,GutmanRubin:2012,Gutman-Rubin:2015}) would not be directly applicable since the potential outcomes can take non negative integer values. Presented within the Rubin 
causal framework, we argued that imputing the missing count potential outcomes dominates the commonly-used approach that directly regresses the count outcomes on the observed treatment and background covariates by allowing flexibility in drawing inferences.

Our proposed framework provides a Bayesian framework that can be extended to address more complex causal questions with count data. Some examples are: $(i)$ Principal stratification in non-compliance settings, $(ii)$ Potential count time series, $(iii)$ zero-inflated potential outcomes,  $(iv)$  nonparametric modeling of covariates with count potential outcomes, and $(v)$ multifactorial designs with count outcomes. Therefore, it is envisaged that this article will open up a discussion on causal problems whose data comes in the form of counts, which can be found in almost all areas of statistics, health, social, and physical sciences.

\vspace{5mm}

\newpage

\bibliographystyle{apa}
\bibliography{count-data-potential-outcomes.bib}

\end{document}